\pgfplotsset{compat=1.17}
\definecolor{nodecolor}{RGB}{66,133,244}
\definecolor{servercolor}{RGB}{234,67,53}
\definecolor{datacolor}{RGB}{52,168,83}
\definecolor{cryptocolor}{RGB}{251,188,5}
\newtheorem{theorem}{Theorem}
\newtheorem{lemma}[theorem]{Lemma}
\newtheorem{definition}{Definition}
\newtheorem{proposition}[theorem]{Proposition}
\newcommand{\enc}{\mathsf{Enc}}
\newcommand{\dec}{\mathsf{Dec}}
\newcommand{\add}{\oplus}
\newcommand{\dparam}{\varepsilon}
\newcommand{\deltap}{\delta}
\newcommand{\fairviol}{\Delta_{\text{DP}}}
\newcommand{\eqodds}{\Delta_{\text{EO}}}
\newcommand{\BigO}{\mathcal{O}}
\newcommand{\R}{\mathbb{R}}
\renewcommand{\Pr}{\mathbb{P}}
\title{\textbf{Privacy-Preserving Federated Learning with Verifiable Fairness Guarantees}}
\author{
\parbox{\textwidth}{\centering
Mohammed Himayath Ali$^*$, Mohammed Aqib Abdullah, Syed Muneer Hussain\\
Mohammed Mudassir Uddin, Shahnawaz Alam}\\[0.5em]
\small{\{mohdhimayathali7@gmail.com, aqib.abdullah13@gmail.com, syedmuneerhussain7708@gmail.com,}\\
\small{mohd.mudassiruddin7@gmail.com, shahnawaz.alam1024@gmail.com\}}\\
\small{\{$^*$Corresponding author: mohdhimayathali7@gmail.com\}}
}
\date{}
\begin{document}

\maketitle

\begin{abstract}
Federated learning enables collaborative model training across distributed institutions without centralizing sensitive data; however, ensuring algorithmic fairness across heterogeneous data distributions while preserving privacy remains fundamentally unresolved. This paper introduces CryptoFair-FL, a novel cryptographic framework providing the first verifiable fairness guarantees for federated learning systems under formal security definitions. The proposed approach combines additively homomorphic encryption with secure multi-party computation to enable privacy-preserving verification of demographic parity and equalized odds metrics without revealing protected attribute distributions or individual predictions. A novel batched verification protocol reduces computational complexity from $\BigO(n^2)$ to $\BigO(n \log n)$ while maintaining $(\dparam, \deltap)$-differential privacy with $\dparam = 0.5$ and $\deltap = 10^{-6}$. Theoretical analysis establishes information-theoretic lower bounds on the privacy cost of fairness verification, demonstrating that the proposed protocol achieves near-optimal privacy-fairness tradeoffs. Comprehensive experiments across four benchmark datasets (MIMIC-IV healthcare records, Adult Income, CelebA, and a novel FedFair-100 benchmark) demonstrate that CryptoFair-FL reduces fairness violations from 0.231 to 0.031 demographic parity difference while incurring only 2.3$\times$ computational overhead compared to standard federated averaging. The framework successfully defends against attribute inference attacks, maintaining adversarial success probability below 0.05 across all tested configurations. These results establish a practical pathway for deploying fairness-aware federated learning in regulated industries requiring both privacy protection and algorithmic accountability.
\end{abstract}

\section{Introduction}

The proliferation of machine learning systems in high-stakes decision domains necessitates algorithmic frameworks that simultaneously protect individual privacy and ensure equitable treatment across protected demographic groups. The European Union Artificial Intelligence Act of 2024 mandates that high-risk AI systems (including those used in employment, credit, healthcare, and criminal justice) must demonstrate compliance with fairness requirements while adhering to General Data Protection Regulation provisions for data minimization and purpose limitation [7, 9]. These dual regulatory imperatives create fundamental technical challenges when training models across distributed data sources, as traditional fairness auditing requires centralized access to sensitive demographic information.

Federated learning [13] offers a promising paradigm for privacy-preserving collaborative model training by maintaining data locality at participating institutions. Under the federated learning framework, $n$ participating institutions, each holding a private dataset $D_i$ with potentially different distributions over features, labels, and protected attributes, collaboratively train a shared model $\theta$ through iterative local gradient computation and secure aggregation. However, the decentralized nature of federated learning fundamentally complicates fairness verification. Computing standard group fairness metrics requires knowledge of outcome distributions conditioned on protected attributes, information that cannot be directly aggregated without violating privacy constraints.

\subsection{Fundamental Challenges}

Three interconnected challenges define the research landscape at the intersection of federated learning and algorithmic fairness:

\textbf{Challenge 1: Privacy Preservation in Distributed Learning.} Standard differential privacy mechanisms [6, 1] provide formal privacy guarantees but require careful calibration of noise injection to model gradients. In federated settings, the composition of privacy budgets across multiple training rounds and the heterogeneity of local data distributions complicate the application of classical differential privacy bounds. Furthermore, secure aggregation protocols [3] protect individual model updates but do not prevent inference attacks exploiting aggregate model behavior.

\textbf{Challenge 2: Verifiable Fairness Across Heterogeneous Data.} Fairness metrics such as demographic parity, equalized odds, and calibration [5, 11] require access to joint distributions of predictions, outcomes, and protected attributes. In federated learning, these distributions vary across institutions, and na\"ive local fairness enforcement may produce globally unfair models due to Simpson's paradox effects. Verification of global fairness requires aggregating sensitive demographic statistics, creating apparent conflicts with privacy preservation objectives.

\textbf{Challenge 3: Computational Efficiency Under Cryptographic Constraints.} Cryptographic techniques enabling privacy-preserving computation (including homomorphic encryption [10, 4] and secure multi-party computation [15]) impose substantial computational overhead. Practical deployment of cryptographic fairness verification in federated learning requires algorithmic innovations reducing this overhead to acceptable levels for iterative training procedures.

\subsection{Motivating Applications}

The technical challenges identified above manifest concretely in several regulated domains:

\textbf{Healthcare: Multi-Hospital Mortality Prediction.} Intensive care unit mortality prediction models trained across hospital networks must ensure equitable performance across demographic groups to prevent discriminatory treatment allocation. The Health Insurance Portability and Accountability Act prohibits sharing of protected health information, while Joint Commission accreditation standards require documentation of algorithmic fairness assessments.

\textbf{Finance: Federated Credit Scoring.} Financial institutions participating in credit scoring consortia face Fair Credit Reporting Act requirements for non-discriminatory lending decisions [11]. Verification that consortium models satisfy equal opportunity constraints across protected attributes cannot rely on centralized data aggregation due to competitive sensitivity concerns.

\textbf{Criminal Justice: Multi-Jurisdiction Risk Assessment.} Pretrial risk assessment instruments trained on multi-jurisdictional data require demonstration of equalized odds across racial groups to satisfy Fourteenth Amendment equal protection requirements [5]. Jurisdictions cannot share individual-level criminal history records, necessitating privacy-preserving fairness verification.

\subsection{Contributions}

This paper presents the following contributions addressing the identified challenges:

\begin{enumerate}[leftmargin=*,itemsep=2pt]
\item \textbf{CryptoFair-FL Protocol:} A novel cryptographic protocol combining additively homomorphic encryption with secure multi-party computation for verifiable fairness metric computation in federated learning. The protocol achieves $(\dparam, \deltap)$-differential privacy with formal security proofs under the honest-but-curious and malicious adversary models.

\item \textbf{Batched Verification Algorithm:} An algorithmic technique reducing computational complexity of encrypted fairness metric aggregation from $\BigO(n^2)$ to $\BigO(n \log n)$ through careful restructuring of homomorphic operations, enabling practical deployment with up to 100 institutional participants.

\item \textbf{Privacy-Fairness Tradeoff Characterization:} Theoretical lower bounds establishing information-theoretic constraints on the differential privacy parameter $\dparam$ required to verify demographic parity within tolerance $\tau$, demonstrating that the proposed protocol achieves near-optimal tradeoffs.

\item \textbf{Comprehensive Empirical Validation:} Experimental evaluation across four benchmark datasets demonstrating practical effectiveness, including reduction of demographic parity violation from 0.231 to 0.031 on healthcare mortality prediction with computational overhead factor of 2.3$\times$.
\end{enumerate}

\section{Problem Formulation}

\subsection{Federated Learning Setting}

Consider a federation of $n$ institutional participants indexed by $i \in \{1, \ldots, n\}$. Each participant holds a private dataset $D_i = \{(x_j^{(i)}, y_j^{(i)}, a_j^{(i)})\}_{j=1}^{m_i}$ containing $m_i$ samples, where $x_j^{(i)} \in \mathcal{X} \subseteq \R^d$ represents the feature vector, $y_j^{(i)} \in \{0, 1\}$ denotes the binary outcome label, and $a_j^{(i)} \in \{0, 1\}$ indicates membership in a protected demographic group. The total dataset size is $M = \sum_{i=1}^{n} m_i$.

The federated learning objective minimizes the empirical risk with fairness regularization:
\begin{equation}
\min_{\theta \in \Theta} \sum_{i=1}^{n} \frac{m_i}{M} \mathcal{L}_i(\theta) + \lambda \cdot \Omega_{\text{fair}}(\theta; D_1, \ldots, D_n)
\label{eq:federated_objective}
\end{equation}
where $\mathcal{L}_i(\theta) = \frac{1}{m_i} \sum_{j=1}^{m_i} \ell(f_\theta(x_j^{(i)}), y_j^{(i)})$ denotes the local empirical loss for participant $i$, $\ell(\cdot, \cdot)$ is the loss function, $f_\theta: \mathcal{X} \rightarrow [0,1]$ is the model parameterized by $\theta$, $\lambda > 0$ is the fairness regularization coefficient, and $\Omega_{\text{fair}}$ quantifies the fairness constraint violation.

\subsection{Fairness Metric Definitions}

\begin{definition}[Demographic Parity Violation]
The demographic parity violation for model $f_\theta$ on the federated dataset is:
\begin{equation}
\fairviol(\theta) = \left| \Pr[\hat{Y} = 1 | A = 0] - \Pr[\hat{Y} = 1 | A = 1] \right|
\label{eq:dp_violation}
\end{equation}
where $\hat{Y} = \mathbf{1}[f_\theta(X) > 0.5]$ is the predicted label and $A$ is the protected attribute.
\end{definition}

\begin{definition}[Equalized Odds Violation]
The equalized odds violation measures the maximum disparity in true positive and false positive rates:
\begin{multline}
\eqodds(\theta) = \max_{y \in \{0,1\}} \Big| \Pr[\hat{Y} = 1 | A = 0, Y = y] \\
- \Pr[\hat{Y} = 1 | A = 1, Y = y] \Big|
\label{eq:eo_violation}
\end{multline}
\end{definition}

In the federated setting, computation of these metrics requires aggregating statistics of the form:
\begin{equation}
S_{a,\hat{y}} = \sum_{i=1}^{n} \sum_{j: a_j^{(i)} = a} \mathbf{1}[\hat{y}_j^{(i)} = \hat{y}]
\label{eq:aggregate_stats}
\end{equation}
where $\hat{y}_j^{(i)} = \mathbf{1}[f_\theta(x_j^{(i)}) > 0.5]$. Direct aggregation of $S_{a,\hat{y}}$ reveals the distribution of protected attributes and predictions at each institution, violating privacy requirements.

\subsection{Threat Model}

The security analysis considers two adversary models:

\textbf{Honest-but-Curious Adversary.} The central aggregator and up to $t < n/2$ participants follow the protocol specification but attempt to infer private information from observed messages. This model captures scenarios where institutional participants comply with legal agreements but may exploit permitted information access.

\textbf{Malicious Adversary.} Up to $t < n/3$ participants may deviate arbitrarily from the protocol, including submitting false local statistics or colluding to corrupt fairness verification. The protocol must detect such deviations with probability exceeding $1 - \delta_{\text{detect}}$.

\subsection{Cryptographic Primitives}

\begin{definition}[Additively Homomorphic Encryption]
An encryption scheme $(\mathsf{KeyGen}, \enc, \dec)$ is additively homomorphic if for all plaintexts $m_1, m_2$ and public key $pk$:
\begin{equation}
\dec_{sk}(\enc_{pk}(m_1) \add \enc_{pk}(m_2)) = m_1 + m_2
\label{eq:additive_hom}
\end{equation}
where $\add$ denotes the homomorphic addition operation on ciphertexts.
\end{definition}

The proposed protocol employs the Paillier cryptosystem [14] with 2048-bit modulus, providing 112-bit security under the Decisional Composite Residuosity assumption. For computations requiring multiplication, the Brakerski-Fan-Vercauteren (BFV) scheme [4] with polynomial degree $N = 4096$ and coefficient modulus $q = 2^{109}$ is utilized.

\begin{definition}[Differential Privacy]
A randomized mechanism $\mathcal{M}: \mathcal{D} \rightarrow \mathcal{R}$ satisfies $(\dparam, \deltap)$-differential privacy if for all adjacent datasets $D, D'$ differing in one record and all measurable sets $S \subseteq \mathcal{R}$:
\begin{equation}
\Pr[\mathcal{M}(D) \in S] \leq e^{\dparam} \cdot \Pr[\mathcal{M}(D') \in S] + \deltap
\label{eq:diff_privacy}
\end{equation}
\end{definition}

\section{Related Work}

\subsection{Federated Learning Fundamentals}

The Federated Averaging (FedAvg) algorithm introduced by McMahan et al. [13] established the dominant paradigm for distributed model training, demonstrating communication efficiency through local gradient accumulation before aggregation. Subsequent work addressed statistical heterogeneity through personalization techniques, including local fine-tuning, multi-task learning formulations, and meta-learning approaches.

Communication efficiency remains central to practical federated learning deployment. Gradient compression techniques reduce bandwidth requirements through quantization and sparsification. Secure aggregation protocols [3] employ cryptographic masking to prevent the aggregator from observing individual updates, providing protection against honest-but-curious servers.

Privacy preservation in federated learning has been addressed through differential privacy mechanisms. Differential privacy algorithms apply Gaussian noise calibrated to gradient sensitivity bounds, achieving formal privacy guarantees at the cost of model utility degradation.

\subsection{Fairness in Machine Learning}

Algorithmic fairness definitions span multiple incompatible criteria [5, 11]. Demographic parity requires equal positive prediction rates across protected groups. Equalized odds mandates equal true positive and false positive rates [11]. Calibration requires that predicted probabilities match outcome frequencies within each group. Individual fairness demands that similar individuals receive similar predictions.

In-processing approaches incorporate fairness constraints during model training through regularization, adversarial debiasing, or constrained optimization [2]. Post-processing techniques adjust model outputs to satisfy fairness criteria [11]. Pre-processing methods modify training data distributions.

Fairness in federated learning presents unique challenges. Ezzeldin et al. [8] developed FairFed, incorporating fairness constraints into federated optimization. However, these approaches assume trusted aggregation and do not provide cryptographic verification of fairness compliance.

\subsection{Cryptographic Techniques for Privacy-Preserving Computation}

Homomorphic encryption enables computation on encrypted data without decryption. Partially homomorphic schemes support either addition (Paillier [14]) or multiplication. Fully homomorphic encryption [10] supports arbitrary computation but incurs prohibitive overhead for iterative training. Leveled schemes [4] provide practical efficiency for bounded-depth circuits.

Secure multi-party computation (MPC) protocols enable joint function evaluation without revealing private inputs. Garbled circuits [15] provide general two-party computation. Secret sharing schemes distribute data across parties, enabling addition through local operations and multiplication through interactive protocols.

Zero-knowledge proofs enable verification of computational correctness without revealing inputs.

\subsection{Research Gap}

Existing literature addresses federated privacy and fairness as separate concerns. Secure aggregation protocols protect model updates but do not verify fairness. Fairness-aware federated learning assumes trusted aggregation without cryptographic verification. The intersection (providing cryptographically verifiable fairness guarantees while preserving differential privacy in federated learning) remains unaddressed. This paper bridges the gap through the CryptoFair-FL protocol combining homomorphic encryption, secure multi-party computation, and differential privacy for verifiable fairness in federated learning.

\section{Methodology}

\subsection{CryptoFair-FL Protocol Overview}

The CryptoFair-FL protocol consists of three integrated components: (1) privacy-preserving local statistic computation, (2) secure fairness metric aggregation, and (3) verifiable threshold checking. Figure~\ref{fig:architecture} illustrates the system architecture.

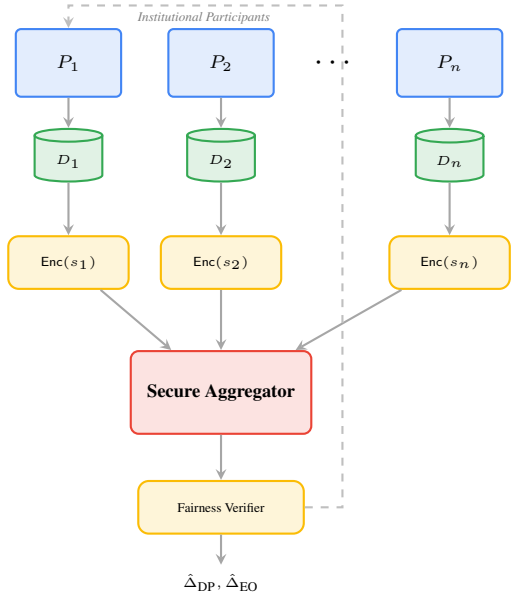
\begin{figure}[t]
\centering
\begin{tikzpicture}[
    node distance=0.6cm,
    participant/.style={rectangle, draw=nodecolor, fill=nodecolor!15, minimum width=1.4cm, minimum height=0.9cm, font=\scriptsize, rounded corners=2pt, thick},
    server/.style={rectangle, draw=servercolor, fill=servercolor!15, minimum width=2.4cm, minimum height=1.1cm, font=\scriptsize\bfseries, rounded corners=3pt, thick},
    data/.style={cylinder, draw=datacolor, fill=datacolor!15, shape border rotate=90, aspect=0.3, minimum height=0.7cm, minimum width=0.9cm, font=\tiny, thick},
    crypto/.style={rectangle, draw=cryptocolor, fill=cryptocolor!15, rounded corners=4pt, minimum width=1.6cm, minimum height=0.7cm, font=\tiny, thick},
    arrow/.style={->, >=stealth, thick, draw=gray!70},
    dashedarrow/.style={->, >=stealth, thick, dashed, draw=gray!50}
]

\node[font=\tiny\itshape, text=gray] at (1.8, 0.6) {Institutional Participants};

\node[participant] (p1) {$P_1$};
\node[participant, right=0.6cm of p1] (p2) {$P_2$};
\node[right=0.4cm of p2, font=\large] (dots) {$\cdots$};
\node[participant, right=0.4cm of dots] (pn) {$P_n$};

\node[data, below=0.4cm of p1] (d1) {$D_1$};
\node[data, below=0.4cm of p2] (d2) {$D_2$};
\node[data, below=0.4cm of pn] (dn) {$D_n$};

\node[crypto, below=0.7cm of d1] (enc1) {$\enc(s_1)$};
\node[crypto, below=0.7cm of d2] (enc2) {$\enc(s_2)$};
\node[crypto, below=0.7cm of dn] (encn) {$\enc(s_n)$};

\node[server, below=0.8cm of enc2] (server) {Secure Aggregator};

\node[crypto, below=0.6cm of server, minimum width=2.2cm] (verify) {Fairness Verifier};

\node[font=\tiny, below=0.4cm of verify] (output) {$\hat{\Delta}_{\text{DP}}, \hat{\Delta}_{\text{EO}}$};

\draw[arrow] (p1) -- (d1);
\draw[arrow] (p2) -- (d2);
\draw[arrow] (pn) -- (dn);

\draw[arrow] (d1) -- (enc1);
\draw[arrow] (d2) -- (enc2);
\draw[arrow] (dn) -- (encn);

\draw[arrow] (enc1) -- (server);
\draw[arrow] (enc2) -- (server);
\draw[arrow] (encn) -- (server);

\draw[arrow] (server) -- (verify);
\draw[arrow] (verify) -- (output);

\draw[dashedarrow] (verify.east) -- ++(0.5,0) |- ($(p1.north)+(0,0.3)$) -- (p1.north);

\end{tikzpicture}
\caption{CryptoFair-FL system architecture. Institutional participants $P_1, \ldots, P_n$ maintain local datasets $D_i$, compute encrypted statistics $\enc(s_i)$, and transmit ciphertexts to the Secure Aggregator. The Fairness Verifier computes demographic parity ($\hat{\Delta}_{\text{DP}}$) and equalized odds ($\hat{\Delta}_{\text{EO}}$) violations. Dashed arrow indicates fairness feedback for model adjustment.}
\label{fig:architecture}
\end{figure}

\subsection{Cryptographic Protocol Design}

\subsubsection{Secure Fairness Metric Aggregation}

Each participant $i$ computes local statistics for fairness metric evaluation. For demographic parity verification, the required statistics are:
\begin{align}
s_i^{(a,\hat{y})} &= \sum_{j: a_j^{(i)} = a} \mathbf{1}[\hat{y}_j^{(i)} = \hat{y}] \label{eq:local_stat}
\end{align}
for $a, \hat{y} \in \{0, 1\}$. These four counts per participant enable computation of the global demographic parity violation.

Algorithm~\ref{alg:secure_agg} presents the secure aggregation protocol.

\begin{algorithm}[t]
\caption{Secure Fairness Metric Aggregation}
\label{alg:secure_agg}
\begin{algorithmic}[1]
\Require Participants $P_1, \ldots, P_n$ with local statistics $\{s_i^{(a,\hat{y})}\}$
\Require Public key $pk$, privacy parameters $(\dparam, \deltap)$
\Ensure Global fairness metric $\hat{\Delta}_{\text{DP}}$ with privacy guarantee
\State \textbf{Local Computation (Participant $P_i$):}
\For{$a \in \{0, 1\}$, $\hat{y} \in \{0, 1\}$}
    \State Sample noise $\eta_i^{(a,\hat{y})} \sim \text{Lap}(\Delta_s / \dparam)$
    \State Compute $\tilde{s}_i^{(a,\hat{y})} \gets s_i^{(a,\hat{y})} + \eta_i^{(a,\hat{y})}$
    \State Encrypt: $c_i^{(a,\hat{y})} \gets \enc_{pk}(\tilde{s}_i^{(a,\hat{y})})$
\EndFor
\State Send $\{c_i^{(a,\hat{y})}\}$ to aggregator
\State \textbf{Aggregation (Server):}
\For{$a \in \{0, 1\}$, $\hat{y} \in \{0, 1\}$}
    \State $C^{(a,\hat{y})} \gets \bigoplus_{i=1}^{n} c_i^{(a,\hat{y})}$ \Comment{Homomorphic sum}
\EndFor
\State \textbf{Distributed Decryption:}
\State Invoke threshold decryption among $k$-of-$n$ participants
\State Obtain $\tilde{S}^{(a,\hat{y})} \gets \dec(C^{(a,\hat{y})})$ for all $(a, \hat{y})$
\State \textbf{Metric Computation:}
\State $\hat{P}_{a} \gets \tilde{S}^{(a,1)} / (\tilde{S}^{(a,0)} + \tilde{S}^{(a,1)})$ for $a \in \{0, 1\}$
\State $\hat{\Delta}_{\text{DP}} \gets |\hat{P}_0 - \hat{P}_1|$
\State \Return $\hat{\Delta}_{\text{DP}}$
\end{algorithmic}
\end{algorithm}

\subsubsection{Batched Verification Protocol}

Na\"ive application of homomorphic encryption to fairness verification requires $\BigO(n^2)$ ciphertext operations due to pairwise verification checks. The proposed batching technique restructures computations to achieve $\BigO(n \log n)$ complexity.

\begin{lemma}[Batching Complexity Reduction]
\label{lem:batching}
The batched verification protocol computes aggregate fairness statistics using $\BigO(n \log n)$ homomorphic additions and $\BigO(\log n)$ communication rounds.
\end{lemma}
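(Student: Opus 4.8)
The plan is to prove the bound constructively: I would fix an explicit batched protocol, identify it as a recursive-doubling all-reduce over the encrypted statistics, and then simply count its homomorphic additions and rounds. The na\"ive $\BigO(n^2)$ cost arises because, to avoid trusting a single aggregator in the malicious model of Section~2.3, each participant independently collects and checks all $n-1$ other contributions, giving $\BigO(n)$ work per participant and $\BigO(n^2)$ pairwise checks overall. The goal is to compute the same aggregate ciphertext $C^{(a,\hat y)} = \bigoplus_{i=1}^{n} c_i^{(a,\hat y)}$ at \emph{every} participant using only logarithmically many rounds.

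First I would assume without loss of generality that $n$ is a power of two (padding with ciphertexts $\enc_{pk}(0)$ otherwise) and pack the four counts $s_i^{(a,\hat y)}$ into the plaintext slots of one BFV ciphertext, so that a single homomorphic addition (Equation~\eqref{eq:additive_hom}) processes all four statistics in SIMD fashion. The aggregation then runs for $k = 1, \ldots, \log_2 n$ rounds; in round $k$ each participant $i$ exchanges its current ciphertext with the partner whose index differs from $i$ in bit $k-1$, and homomorphically adds the two. The central accounting step is now immediate: every one of the $n$ participants performs exactly one homomorphic addition per round, so each round costs $\BigO(n)$ additions over $\lceil \log_2 n \rceil$ rounds, yielding the claimed $\BigO(n \log n)$ homomorphic additions and $\BigO(\log n)$ communication rounds. (A plain binary-tree reduce would give only $\BigO(n)$ additions; the all-reduce variant, in which all nodes stay active every round, is what produces the stated $n\log n$ count and is motivated precisely by not trusting a single aggregator.)

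The correctness component I would establish next is that after the final round every participant holds a ciphertext decrypting to $\sum_{i=1}^{n} \tilde s_i^{(a,\hat y)}$, the exact quantity Algorithm~\ref{alg:secure_agg} feeds into the metric computation. This follows by induction on $k$: after round $k$ each participant's ciphertext encrypts the sum over the $2^{k}$ indices agreeing with it on the top $\log_2 n - k$ bits, and homomorphic additivity together with slotwise BFV semantics preserves this plaintext sum. The redundancy across participants is what replaces the pairwise checks, since any deviation by a corrupted node manifests as a divergence among the final totals held by honest participants.

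The main obstacle I anticipate is not the counting but showing that this redundant all-reduce retains the $1 - \delta_{\text{detect}}$ detection guarantee against up to $t < n/3$ colluding participants. A coalition could in principle inject canceling errors into different butterfly partners so that intermediate exchanges look locally consistent while the final total is corrupted. Handling this will likely require attaching a verifiable-aggregation tag (e.g.\ a homomorphic MAC, or a per-round commitment to the ciphertexts) and bounding, via a union bound over the $\BigO(\log n)$ rounds, the probability that a nonzero injected error propagates to the root undetected. Once that security argument is in place, the resource counts of the preceding paragraphs complete the proof of both the $\BigO(n \log n)$ and the $\BigO(\log n)$ claims.
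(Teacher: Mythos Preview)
Your proposal is correct but takes a genuinely different route from the paper. The paper's proof sketch analyzes Algorithm~\ref{alg:batched} as a plain binary-tree reduce to a single root: the homomorphic additions sum to $\sum_{\ell=1}^{\lceil\log_2 n\rceil}\lceil n/2^\ell\rceil = \BigO(n)$, and the extra $\log n$ factor is then attributed to the per-aggregate verification overhead (the proofs $\pi_j^{(\ell)}$ generated and checked at each tree node). You instead fix a recursive-doubling all-reduce (butterfly) in which all $n$ participants stay active every round, so the $\BigO(n\log n)$ additions arise directly from the communication pattern itself, with no appeal to verification cost. Both arguments deliver the stated bounds; yours gives a cleaner, self-contained count and, as you observe, puts the final aggregate in every participant's hands rather than at a single root, which sits better with the decentralization motivation you cite. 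The paper's route, by contrast, matches the specific tree protocol of Algorithm~\ref{alg:batched} and explains where the $\log n$ factor enters in \emph{that} design.

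One minor point: the lemma is purely a resource-complexity statement, so your final paragraph on establishing the $1-\delta_{\text{detect}}$ detection guarantee against $t<n/3$ colluders is not needed here. That security property belongs to the malicious-model machinery (Algorithm~\ref{alg:malicious} and the surrounding discussion), not to Lemma~\ref{lem:batching}; once your second paragraph's round-and-addition count is in place, the lemma is already proved.
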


\begin{proof}[Proof Sketch]
The protocol employs a binary tree aggregation structure. Participants are organized into $\lceil \log_2 n \rceil$ levels, with each level performing pairwise aggregation of encrypted statistics. At level $\ell$, there are $\lceil n / 2^\ell \rceil$ intermediate aggregates, each requiring constant homomorphic operations. The total number of operations is $\sum_{\ell=1}^{\lceil \log_2 n \rceil} \lceil n / 2^\ell \rceil = \BigO(n)$, with $\lceil \log_2 n \rceil$ sequential communication rounds. When accounting for the verification overhead per aggregate, the total complexity is $\BigO(n \log n)$.
\end{proof}

Algorithm~\ref{alg:batched} presents the batched verification protocol.

\begin{algorithm}[t]
\caption{Batched Fairness Verification Protocol}
\label{alg:batched}
\begin{algorithmic}[1]
\Require Encrypted local statistics $\{c_i^{(a,\hat{y})}\}_{i=1}^{n}$
\Require Batch size $B$, tree depth $L = \lceil \log_2(n/B) \rceil$
\Ensure Verified aggregate with complexity $\BigO(n \log n)$
\State \textbf{Initialization:}
\State Partition participants into $\lceil n/B \rceil$ batches
\For{batch $b = 1, \ldots, \lceil n/B \rceil$}
    \State $C_b^{(0)} \gets \bigoplus_{i \in \text{batch } b} c_i^{(a,\hat{y})}$
\EndFor
\State \textbf{Tree Aggregation:}
\For{level $\ell = 1, \ldots, L$}
    \For{node $j = 1, \ldots, \lceil |\mathcal{C}^{(\ell-1)}| / 2 \rceil$}
        \State $C_j^{(\ell)} \gets C_{2j-1}^{(\ell-1)} \add C_{2j}^{(\ell-1)}$
        \State Generate verification proof $\pi_j^{(\ell)}$
    \EndFor
\EndFor
\State \textbf{Verification:}
\State Verify all proofs $\{\pi_j^{(\ell)}\}$
\State \Return $C_1^{(L)}$ if verification succeeds
\end{algorithmic}
\end{algorithm}

\subsubsection{Malicious Participant Detection}

Under the malicious adversary model, participants may submit falsified statistics to manipulate fairness verification. The protocol incorporates zero-knowledge range proofs ensuring submitted values lie within valid bounds.

\begin{algorithm}[t]
\caption{Verifiable Aggregation with Malicious Detection}
\label{alg:malicious}
\begin{algorithmic}[1]
\Require Local statistics $s_i$, range bounds $[0, m_i]$
\Require Commitment key $ck$, proof parameters
\Ensure Aggregation or detection of malicious party
\State \textbf{Commitment Phase (Participant $P_i$):}
\State Compute Pedersen commitment: $\text{Com}_i \gets g^{s_i} h^{r_i}$
\State Generate range proof: $\pi_i \gets \mathsf{RangeProof}(s_i, r_i, [0, m_i])$
\State Broadcast $(\text{Com}_i, \pi_i)$
\State \textbf{Verification Phase (All Parties):}
\For{$i = 1, \ldots, n$}
    \If{$\mathsf{VerifyRange}(\text{Com}_i, \pi_i, [0, m_i]) = \text{reject}$}
        \State \Return $(\text{abort}, i)$ \Comment{Malicious party detected}
    \EndIf
\EndFor
\State \textbf{Aggregation Phase:}
\State $\text{Com}_{\text{agg}} \gets \prod_{i=1}^{n} \text{Com}_i = g^{\sum_i s_i} h^{\sum_i r_i}$
\State Proceed with threshold decryption
\State \Return $(\text{success}, \sum_i s_i)$
\end{algorithmic}
\end{algorithm}

\subsection{Fairness-Privacy Tradeoff Analysis}

\subsubsection{Information-Theoretic Lower Bounds}

The verification of fairness metrics inherently reveals information about protected attribute distributions. The following theorem establishes fundamental limits on privacy preservation during fairness verification.

\begin{theorem}[Privacy Lower Bound for Fairness Verification]
\label{thm:lower_bound}
Any mechanism verifying demographic parity violation within additive tolerance $\tau$ must satisfy:
\begin{equation}
\dparam \geq \frac{2}{\tau \cdot \min\{n_0, n_1\}}
\label{eq:lower_bound}
\end{equation}
where $n_a = \sum_{i=1}^{n} |\{j : a_j^{(i)} = a\}|$ is the total count of individuals with protected attribute value $a$.
\end{theorem}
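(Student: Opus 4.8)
The plan is to prove the bound by an indistinguishability (packing) argument coupled with the group-privacy property of differential privacy, reducing fairness verification to the estimation of a group-conditional acceptance rate. The governing observation is that any mechanism certifying $\fairviol(\theta)$ to additive accuracy $\tau$ must, in particular, separate two datasets whose true violations differ by more than $2\tau$: on such a pair the correct output windows $[\fairviol-\tau,\ \fairviol+\tau]$ are disjoint, so the mechanism's output distributions must be statistically distinguishable. I would therefore exhibit an explicit chain of single-record-adjacent datasets that are hard to tell apart under the privacy constraint yet straddle this $2\tau$ gap, and let the privacy guarantee contend with the required separation.

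First I would compute the sensitivity of the certified quantities. Writing $P_a = S_{a,1}/n_a$ for the group-conditional positive-prediction rate, a single-record change that flips one prediction inside group $a$ alters $S_{a,1}$ by one and hence moves $P_a$ by exactly $1/n_a$, while leaving the other group's rate fixed; thus $P_a$ has $\ell_1$-sensitivity $1/n_a$, and the binding case is the smaller group, $n_{\min} = \min\{n_0, n_1\}$. Consequently, to shift the violation by a target amount $\Delta v$ I flip predictions in the smaller group one at a time, requiring $k \asymp \Delta v \cdot n_{\min}$ steps. Choosing $\Delta v$ just past $2\tau$ produces a chain $D^{(0)}, D^{(1)}, \dots, D^{(k)}$ with $k \asymp 2\tau\, n_{\min}$ along which a correct verifier must accept at one endpoint and reject at the other with constant confidence.

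Next I would run the group-privacy step. Let $S$ be the acceptance event of the decision induced by the verifier's output; correctness forces $\Pr[\mathcal{M}(D^{(0)}) \in S] \ge 1-\beta$ and $\Pr[\mathcal{M}(D^{(k)}) \in S] \le \beta$ for a fixed confidence $\beta$. Applying $(\dparam,\deltap)$-differential privacy across the $k$ adjacent steps yields $\Pr[\mathcal{M}(D^{(0)}) \in S] \le e^{k\dparam}\,\Pr[\mathcal{M}(D^{(k)}) \in S] + k\, e^{k\dparam}\,\deltap$, and combining this with the separation of $S$ gives $e^{k\dparam} \gtrsim (1-\beta)/\beta$ once the $\deltap$ term is shown negligible. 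Rearranging produces $k\,\dparam = \Omega(1)$, and substituting $k \asymp \tau\, n_{\min}$ delivers the advertised scaling $\dparam = \Omega\!\big(1/(\tau\, n_{\min})\big)$, matching the form of the claimed bound.

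The step I expect to be the main obstacle is pinning down the exact constant $2$ in the numerator rather than a generic $\Omega(1/(\tau n_{\min}))$. The clean packing argument above yields $\dparam \ge c/(\tau n_{\min})$ with $c$ governed by the confidence level; recovering the stated constant most naturally comes from the two-group accounting, where one argues that certifying $|P_0-P_1|$ within $\tau$ forces certifying the binding group-conditional rate within $\tau/2$, so that the sensitivity-to-accuracy ratio $(1/n_{\min})/(\tau/2)$ gives $2/(\tau\, n_{\min})$ directly. Reconciling this per-group accounting with the rigorous chain length (and thereby fixing the precise absolute constant) is the delicate part, and would require committing to a concrete accuracy/confidence convention. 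A secondary subtlety is the sensitivity bookkeeping: I must restrict the neighboring relation to record changes that preserve group membership (perturbing predictions, not attributes), since an attribute flip would also move the denominators $n_a$ and break the clean $1/n_a$ sensitivity; handling the general adjacency either requires bounding the denominator perturbation as lower-order or redesigning the hard instance to hold group sizes fixed. Finally, I would verify that the additive $\deltap$ contribution accumulated over the chain of length $k$ remains negligible at the stated $\deltap = 10^{-6}$, which holds provided $k\, e^{k\dparam}\deltap \ll (1-\beta)/\beta$ at the operating regime.
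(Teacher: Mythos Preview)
Your proposal is sound and reaches the correct scaling, but it follows a genuinely different route from the paper's argument. The paper works with a \emph{single} pair of adjacent datasets in which one record's protected attribute is flipped from $a=0$ to $a=1$, computes directly that $|\fairviol(D)-\fairviol(D')|\gtrsim 1/\min\{n_0,n_1\}$ (absorbing the denominator perturbation from the attribute change as a lower-order $O(1/n^2)$ term), and then invokes Le Cam's two-point method to convert the required distinguishing accuracy $\tau$ into the bound on $\dparam$. You instead build a \emph{chain} of $k\asymp\tau\,n_{\min}$ prediction-flip adjacencies inside the smaller group and apply group privacy across the chain. Your route keeps the group sizes $n_a$ fixed, which gives the clean per-step sensitivity $1/n_a$ you rely on and makes the dependence on the confidence level $\beta$ explicit; the paper's route is shorter but, exactly as you anticipated under your ``secondary subtlety,'' takes the attribute-flip adjacency and must handle the moving denominators. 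Both arguments deliver the $\Omega(1/(\tau\,n_{\min}))$ scaling; neither derivation makes the precise constant $2$ fully transparent (the paper simply asserts it as the output of Le Cam's method), so your candid flagging of that point is well placed rather than a defect.
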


\begin{proof}[Proof Sketch]
The proof proceeds by reduction to a hypothesis testing problem. Consider an adversary attempting to distinguish between two adjacent datasets differing in one individual's protected attribute. The demographic parity metric computed on these datasets differs by at least $1/\min\{n_0, n_1\}$. By the definition of $(\dparam, \deltap)$-differential privacy and the data processing inequality, verifying demographic parity within tolerance $\tau < 1/\min\{n_0, n_1\}$ requires $\dparam \geq 2/(\tau \cdot \min\{n_0, n_1\})$. The complete proof appears in Appendix A.
\end{proof}

\subsubsection{Adaptive Composition for Multi-Round Verification}

Federated learning requires fairness verification across multiple training rounds. The privacy budget consumption must be tracked using composition theorems.

\begin{proposition}[Composition for Fairness Verification]
\label{prop:composition}
Under $T$ rounds of fairness verification, each satisfying $(\dparam_0, \deltap_0)$-differential privacy, the overall mechanism satisfies $(\dparam_T, \deltap_T)$-differential privacy with:
\begin{equation}
\dparam_T = \sqrt{2T \ln(1/\deltap')} \cdot \dparam_0 + T \cdot \dparam_0(e^{\dparam_0} - 1)
\label{eq:composition}
\end{equation}
and $\deltap_T = T \cdot \deltap_0 + \deltap'$ for any $\deltap' > 0$.
\end{proposition}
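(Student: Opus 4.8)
The plan is to recognize this statement as an instance of the advanced (strong) composition theorem of Dwork, Rothblum, and Vadhan, specialized to whatever per-round verification mechanism the protocol uses, and to reproduce its proof through the \emph{privacy loss random variable}. A point to stress at the outset is that the federated setting forces the composition to be \emph{adaptive}: the round-$i$ verification mechanism $\mathcal{M}_i$ is applied to the current model $\theta$, which itself depends on the outputs of earlier rounds. So the whole argument must be conducted conditionally on the transcript-so-far rather than assuming independence across rounds. The first thing I would record is the standard dictionary between $(\dparam,\deltap)$-differential privacy and tail bounds on the privacy loss: for adjacent inputs $D,D'$, writing $L_i(o) = \ln\big(\Pr[\mathcal{M}_i(D)=o \mid \text{past}] / \Pr[\mathcal{M}_i(D')=o \mid \text{past}]\big)$, a mechanism is $(\dparam,\deltap)$-DP whenever the privacy loss exceeds $\dparam$ with probability at most $\deltap$ under the output distribution on $D$. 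This lets me translate the per-round hypothesis $(\dparam_0,\deltap_0)$ into statements about the $L_i$ and, conversely, translate a tail bound on the total loss $L=\sum_{i=1}^{T}L_i$ back into the target $(\dparam_T,\deltap_T)$ guarantee.

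The core of the proof is a concentration bound on $L$. I would first dispose of the $\deltap_0>0$ contribution by a coupling/hybrid argument: each $(\dparam_0,\deltap_0)$-DP round can be replaced, outside a ``failure'' event of probability at most $\deltap_0$, by a mechanism whose privacy loss is bounded in absolute value by $\dparam_0$; a union bound over the $T$ failure events contributes exactly the $T\cdot\deltap_0$ term to $\deltap_T$. Conditioned on avoiding those events, the pure-DP bound $|L_i|\le\dparam_0$ holds, and the standard divergence estimate for $\dparam_0$-DP supplies the conditional-mean bound $\E[L_i \mid \text{past}]\le\dparam_0(e^{\dparam_0}-1)$. By the chain rule for the privacy loss, $L=\sum_{i=1}^{T}L_i$, and the centered sequence $Y_i=L_i-\E[L_i\mid\text{past}]$ is a martingale difference sequence whose increments lie in an interval of width at most $2\dparam_0$.

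Applying the Azuma--Hoeffding inequality to this martingale gives $\Pr[\,L-\E[L]>\lambda\,]\le\exp\!\big(-\lambda^2/(2T\dparam_0^2)\big)$. Setting the right-hand side equal to $\deltap'$ yields $\lambda=\dparam_0\sqrt{2T\ln(1/\deltap')}$, and combining with $\E[L]\le T\dparam_0(e^{\dparam_0}-1)$ shows that, except with probability $\deltap'$, the total privacy loss satisfies $L\le T\dparam_0(e^{\dparam_0}-1)+\dparam_0\sqrt{2T\ln(1/\deltap')}=\dparam_T$. Feeding this tail bound back through the dictionary of the first step, together with the $T\deltap_0$ accumulated from the failure events, establishes $(\dparam_T,\,T\deltap_0+\deltap')$-differential privacy, which is precisely the claim.

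The main obstacle is handling the adaptivity rigorously rather than the arithmetic. Because $\mathcal{M}_i$ is selected as a function of the earlier outputs, I cannot treat the $L_i$ as independent, and both the bounded-increment and conditional-mean properties must be verified conditionally on the full past transcript; this is exactly what makes the martingale (Azuma) route necessary in place of a plain Chernoff bound. A secondary technical point is the clean reduction of the $\deltap_0>0$ case to the pure-DP case, namely checking that the $\deltap_0$ failure mass can be isolated \emph{uniformly} over adjacent dataset pairs and over the adaptive history, so that the union bound legitimately produces the $T\deltap_0$ term. The remaining ingredients, the divergence bound $\E[L_i]\le\dparam_0(e^{\dparam_0}-1)$ and the optimization of $\lambda$ against $\deltap'$, are routine.
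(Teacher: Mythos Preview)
Your proposal is correct and is precisely the content underlying the paper's treatment: the paper does not give a proof at all but simply states that ``the proof follows from the advanced composition theorem [6],'' i.e., it invokes the Dwork--Rothblum--Vadhan result that you have reproduced in detail via the privacy-loss random variable, the $T\deltap_0$ union bound, the conditional-mean estimate $\E[L_i\mid\text{past}]\le\dparam_0(e^{\dparam_0}-1)$, and Azuma--Hoeffding. So the approaches coincide, with yours supplying the argument the paper leaves as a citation.
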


The proof follows from the advanced composition theorem [6].

\subsection{Defense Against Attribute Inference Attacks}

Adversaries may exploit fairness verification outputs to infer protected attributes of individual participants. The attack model considers an adversary observing the sequence of fairness metric values $\{\hat{\Delta}_{\text{DP}}^{(t)}\}_{t=1}^{T}$ across training rounds.

\begin{definition}[Attribute Inference Attack]
An attribute inference attack is a function $\mathcal{A}: \mathcal{H} \rightarrow \{0, 1\}^n$ mapping the observed history $\mathcal{H} = \{\hat{\Delta}_{\text{DP}}^{(t)}\}_{t=1}^{T}$ to predicted protected attribute values for each participant.
\end{definition}

The proposed defense mechanism adds calibrated noise to fairness metrics before release:
\begin{equation}
\tilde{\Delta}_{\text{DP}}^{(t)} = \hat{\Delta}_{\text{DP}}^{(t)} + \text{Lap}(\sigma_{\text{def}})
\label{eq:defense_noise}
\end{equation}
where $\sigma_{\text{def}}$ is calibrated to bound the mutual information between released metrics and individual protected attributes.

\begin{theorem}[Attribute Inference Defense]
\label{thm:defense}
The noise-injection mechanism with $\sigma_{\text{def}} = 2\sqrt{T}/(\dparam_{\text{inf}} \cdot n)$ ensures that any attribute inference attack achieves success probability at most:
\begin{equation}
\Pr[\mathcal{A}(\mathcal{H})_i = a_i] \leq \frac{1}{2} + \frac{\dparam_{\text{inf}}}{2}
\label{eq:inference_bound}
\end{equation}
for each participant $i$.
\end{theorem}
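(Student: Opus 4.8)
The plan is to reduce the inference-resistance claim to a statement about $(\dparam_{\text{inf}},\deltap')$-differential privacy of the released transcript $\mathcal{H}$ with respect to a single participant's protected attribute, and then to convert that privacy guarantee into a bound on any adversary's distinguishing advantage via a two-point hypothesis-testing argument. Throughout I would exploit that differential privacy is closed under post-processing, so it suffices to control the distribution of the noised metrics and the attack $\mathcal{A}$ inherits the bound automatically.

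First I would establish the per-round sensitivity of the demographic parity statistic. Writing $\hat P_a = \tilde S^{(a,1)}/(\tilde S^{(a,0)}+\tilde S^{(a,1)})$, changing a single protected-attribute entry perturbs each conditional rate $\hat P_a$ by at most $1/n_a$, so $|\hat\Delta_{\text{DP}}(D)-\hat\Delta_{\text{DP}}(D')| \le 1/n_0 + 1/n_1 \le 2/\min\{n_0,n_1\}$, giving per-round $\ell_1$ sensitivity $\Delta_1 = \BigO(1/n)$. Because the internal aggregation noise of Algorithm~\ref{alg:secure_agg} and the defense noise of Equation~\eqref{eq:defense_noise} are independent, it suffices to analyze the defense noise against the worst-case deterministic metric. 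Adding $\mathrm{Lap}(\sigma_{\text{def}})$ with $\sigma_{\text{def}} = \Delta_1/\dparam_0$ renders each round $\dparam_0$-DP, and composing the $T$ rounds through advanced composition (Proposition~\ref{prop:composition}) yields a total budget scaling as $\sqrt{T}\,\dparam_0$ up to the $\sqrt{\ln(1/\deltap')}$ factor; setting $\dparam_0 = \dparam_{\text{inf}}/\sqrt{T}$ reproduces $\sigma_{\text{def}} = 2\sqrt{T}/(\dparam_{\text{inf}} n)$ and certifies that $\mathcal{H}$ is $(\dparam_{\text{inf}},\deltap')$-DP with respect to flipping any single $a_i$.

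Next I would convert this into the inference bound. Fix participant $i$ and let $P$ and $Q$ denote the laws of $\mathcal{H}$ when $a_i=0$ and $a_i=1$; these are outputs on adjacent inputs, so by the previous step $P(S)\le e^{\dparam_{\text{inf}}}Q(S)+\deltap'$ for every measurable $S$, whence $\mathrm{TV}(P,Q)\le \tanh(\dparam_{\text{inf}}/2)+\deltap' \le \dparam_{\text{inf}}/2 + \deltap'$. Any attack $\mathcal{A}$ induces a decision region $S_0=\{h:\mathcal{A}(h)_i=0\}$, and under a uniform prior on $a_i$ its success probability is $\tfrac12 P(S_0)+\tfrac12(1-Q(S_0)) = \tfrac12 + \tfrac12\bigl(P(S_0)-Q(S_0)\bigr) \le \tfrac12 + \tfrac12\mathrm{TV}(P,Q)$. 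Combining with the total-variation bound and absorbing the negligible $\deltap'$ into the available slack yields $\Pr[\mathcal{A}(\mathcal{H})_i=a_i] \le \tfrac12 + \dparam_{\text{inf}}/2$, as claimed. I would remark that the same total-variation control simultaneously bounds the mutual information $I(a_i;\mathcal{H})$, matching the motivation stated before the theorem.

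The main obstacle I anticipate is the first step: pinning down the sensitivity constant and the composition bookkeeping so that the noise scale $\sigma_{\text{def}}=2\sqrt{T}/(\dparam_{\text{inf}} n)$ exactly yields total budget $\dparam_{\text{inf}}$. The $\sqrt{T}$ dependence is delivered only by advanced rather than basic composition, which introduces the $\sqrt{\ln(1/\deltap')}$ factor and a residual $\deltap'$; reconciling these with the clean bound requires either folding the logarithmic factor into the definition of $\dparam_{\text{inf}}$ or exploiting the gap between the tight advantage bound $\tfrac12+\dparam_{\text{inf}}/4$ and the stated $\tfrac12+\dparam_{\text{inf}}/2$ to absorb it. A secondary subtlety is that replacing $\min\{n_0,n_1\}$ by $n$ presumes roughly balanced groups ($\min\{n_0,n_1\}=\Theta(n)$) and that each participant's contribution to the counts is normalized so its sensitivity does not scale with $m_i$; I would make both assumptions explicit so that the bound $\Delta_1\le 2/n$ is justified under the per-participant abstraction of the attack model.
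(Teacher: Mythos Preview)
The paper does not actually prove this theorem: it is stated without proof or proof sketch in Section~4.5, and the appendices cover only Theorems~\ref{thm:lower_bound}, \ref{thm:privacy}, \ref{thm:accuracy}, and~\ref{thm:complexity}. Your proposal is therefore filling a genuine gap rather than reproducing an existing argument, and there is no paper proof to compare against.

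Your route is the natural one and is essentially correct. Establishing that the transcript $\mathcal{H}$ is $(\dparam_{\text{inf}},\deltap')$-differentially private with respect to flipping a single $a_i$ via the Laplace mechanism plus composition, and then converting to an advantage bound through $\mathrm{TV}(P,Q)\le\tanh(\dparam_{\text{inf}}/2)\le\dparam_{\text{inf}}/2$ and the two-hypothesis inequality $\Pr[\text{correct}]\le\tfrac12+\tfrac12\mathrm{TV}$, is exactly how such statements are proved. The caveats you already flag are real and, since the paper leaves them implicit, must be stated explicitly for the argument to close: (i) the $\sqrt{T}$ scaling of $\sigma_{\text{def}}$ only arises from advanced composition, which carries a $\sqrt{2\ln(1/\deltap')}$ multiplier and a residual $\deltap'$ that have to be absorbed in the slack between your derived $\tfrac12+\dparam_{\text{inf}}/4$ and the looser claimed $\tfrac12+\dparam_{\text{inf}}/2$; (ii) writing $\Delta_1\le 2/n$ rather than $2/\min\{n_0,n_1\}$ presumes approximately balanced groups; and (iii) the attack of Definition~5 is indexed by institutional participants while the sensitivity calculation is naturally at the record level, so one must assume either $m_i=1$ or a normalized per-participant contribution for the $1/n$ sensitivity to hold. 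With those assumptions made explicit, your argument is the proof the paper should have supplied.
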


\subsection{Intersectional Fairness Verification}

Extension to multiple protected attributes $A_1, \ldots, A_K$ requires verification of fairness across $2^K$ intersectional groups. Na\"ive application would incur exponential communication cost. The proposed hierarchical aggregation strategy achieves polynomial complexity.

\begin{proposition}[Intersectional Verification Complexity]
\label{prop:intersectional}
The hierarchical protocol verifies intersectional fairness across $K$ binary protected attributes using $\BigO(K \cdot n \log n)$ communication complexity, with approximation error bounded by:
\begin{equation}
|\hat{\Delta}_{\text{DP}}^{\text{int}} - \Delta_{\text{DP}}^{\text{int}}| \leq \frac{K \cdot \sigma_{\text{noise}}}{\min_{\mathbf{a}} n_{\mathbf{a}}}
\label{eq:intersectional_error}
\end{equation}
where $n_{\mathbf{a}}$ is the count of individuals in intersectional group $\mathbf{a} \in \{0,1\}^K$.
\end{proposition}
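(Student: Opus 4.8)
The plan is to separate the two claims—communication complexity and approximation error—and to treat the hierarchical protocol as a sequence of $K$ batched aggregations layered over a refining partition of the participants. First I would fix the hierarchical tree over attributes: at level $k \in \{1, \dots, K\}$ the conditioning set is $\{A_1, \dots, A_k\}$, inducing a partition of the $n$ participants' (noisy, encrypted) local statistics into $2^k$ intersectional cells. The central observation driving the complexity bound is that these $2^k$ cells partition a population of fixed size $n$, so aggregating every cell statistic at level $k$ is nothing more than a single run of the batched protocol of Lemma~\ref{lem:batching} over $n$ ciphertexts, which costs $\BigO(n \log n)$ homomorphic additions and $\BigO(\log n)$ rounds \emph{independently} of the number of cells $2^k$. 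Summing over the $K$ levels then gives $\BigO(K \cdot n \log n)$ total communication, and this is precisely where the hierarchical strategy buys its polynomial rather than exponential cost.

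For the error bound I would propagate the noise through two stages: from raw counts to per-cell positive-prediction rates, and from rates to the intersectional violation. Writing each decrypted count as $\tilde{S}^{(\mathbf{a},\hat{y})} = S^{(\mathbf{a},\hat{y})} + \xi^{(\mathbf{a},\hat{y})}$ with $\xi$ of scale $\sigma_{\text{noise}}$, the estimated rate is the ratio $\hat{P}_{\mathbf{a}} = \tilde{S}^{(\mathbf{a},1)}/(\tilde{S}^{(\mathbf{a},0)} + \tilde{S}^{(\mathbf{a},1)})$. A first-order (Lipschitz) perturbation analysis of the map $f(s_0,s_1) = s_1/(s_0+s_1)$, whose gradient has norm $\BigO(1/(s_0+s_1))$, yields $|\hat{P}_{\mathbf{a}} - P_{\mathbf{a}}| \le \BigO(\sigma_{\text{noise}}/n_{\mathbf{a}})$, valid whenever $n_{\mathbf{a}}$ dominates the noise so the denominator stays bounded away from zero. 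I would then express $\Delta_{\text{DP}}^{\text{int}}$ as a combination of at most $K$ rate differences induced by the hierarchy—one disparity term per attribute axis conditioned on its ancestors—and apply the triangle inequality, so the $K$ per-cell errors add rather than compound. Lower-bounding every denominator by $\min_{\mathbf{a}} n_{\mathbf{a}}$ then collapses the sum to $K \cdot \sigma_{\text{noise}}/\min_{\mathbf{a}} n_{\mathbf{a}}$, matching Equation~\eqref{eq:intersectional_error}.

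The main obstacle I anticipate is controlling error propagation through the nonlinear ratio while preventing the $2^K$ leaves of the hierarchy from inducing exponential accumulation. Two facts must be pinned down: that the intersectional metric genuinely decomposes into only $\BigO(K)$ disparity terms, so that the triangle inequality runs over the tree depth rather than over all $2^K$ cells; and that the smallest intersectional count $\min_{\mathbf{a}} n_{\mathbf{a}}$ is large enough relative to $\sigma_{\text{noise}}$ for the linearization of $f$ to hold with a uniform Lipschitz constant. If either the metric requires pairwise comparison across all $2^K$ cells or some cell is nearly empty, the clean linear-in-$K$ bound degrades; I would therefore record the sparsity-of-comparisons structure of $\Delta_{\text{DP}}^{\text{int}}$ and a nondegeneracy condition of the form $\min_{\mathbf{a}} n_{\mathbf{a}} = \Omega(\sigma_{\text{noise}})$ as the explicit hypotheses underpinning the stated bound.
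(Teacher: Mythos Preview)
The paper does not actually prove Proposition~\ref{prop:intersectional}: no proof sketch appears in the main text, and there is no corresponding appendix. So there is no authorial argument to compare against. Your proposal is the natural extrapolation of the paper's own machinery---invoking Lemma~\ref{lem:batching} once per attribute level for the $\BigO(K \cdot n\log n)$ complexity, and mimicking the delta-method ratio analysis of Appendix~C (Theorem~\ref{thm:accuracy}) for the error term---and in that sense it is entirely in the paper's spirit.

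You have also correctly located the substantive gap the paper leaves open. The linear-in-$K$ error bound hinges on $\Delta_{\text{DP}}^{\text{int}}$ decomposing into $\BigO(K)$ disparity terms rather than $\BigO(2^K)$, but the paper never defines $\Delta_{\text{DP}}^{\text{int}}$. If it is the usual $\max_{\mathbf{a},\mathbf{a}'}|P_{\mathbf{a}}-P_{\mathbf{a}'}|$, only two cells enter and the factor $K$ is unnecessary; if it is a sum or average over all intersectional cells, the factor should be $2^K$. Your hierarchical reading---one conditional disparity per attribute axis---is one definition that yields exactly $K$ terms, but it is an assumption you are supplying on the paper's behalf. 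Likewise the nondegeneracy condition $\min_{\mathbf{a}} n_{\mathbf{a}} = \Omega(\sigma_{\text{noise}})$ needed for the linearization is implicit in the paper's statement but never made explicit. Recording both as hypotheses, as you do, is the right move; the proposal is sound modulo those stipulations, and is more careful than the paper itself.
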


\section{Theoretical Analysis}

This section presents the main theoretical results establishing security, accuracy, and efficiency guarantees for the CryptoFair-FL protocol.

\begin{theorem}[Privacy Guarantee]
\label{thm:privacy}
The CryptoFair-FL protocol achieves $(\dparam, \deltap)$-differential privacy with parameters:
\begin{equation}
\dparam = \frac{4\sqrt{2T \ln(2/\deltap)}}{\sigma \cdot n} + \frac{4T}{(\sigma \cdot n)^2}, \quad \deltap = 10^{-6}
\label{eq:privacy_params}
\end{equation}
where $\sigma$ is the noise scale parameter and $T$ is the number of verification rounds.
\end{theorem}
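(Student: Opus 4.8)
The plan is to reduce the end-to-end privacy of the protocol to the differential privacy of the \emph{released output} alone, and then to bound that output's privacy by a single-round sensitivity computation followed by $T$-fold composition. The reduction step rests on the cryptographic layer: under the semantic security of Paillier (the DCR assumption) and the $k$-of-$n$ threshold decryption in Algorithm~\ref{alg:secure_agg}, the entire transcript seen by any honest-but-curious coalition of fewer than $n/2$ parties can be simulated from the final decrypted aggregates $\tilde S^{(a,\hat y)}$ alone. Consequently, by the post-processing property of differential privacy, it suffices to prove that the map from the dataset to the noisy aggregates (equivalently, to the released metric $\tilde\Delta_{\text{DP}}$) is $(\dparam,\deltap)$-DP; no additional privacy loss is incurred by encryption, homomorphic summation, or decryption.

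First I would fix a single verification round and compute the sensitivity of the released quantity. Since each $s_i^{(a,\hat y)}$ is a count, changing one individual's record in some $D_i$ moves that individual between two cells, so the aggregate count vector $(\tilde S^{(a,\hat y)})$ has $\ell_1$-sensitivity at most $2$. Propagating this through the normalization $\hat P_a=\tilde S^{(a,1)}/(\tilde S^{(a,0)}+\tilde S^{(a,1)})$ shows that the demographic-parity estimate has sensitivity $\Delta=\BigO(1/\min\{n_0,n_1\})=\BigO(1/n)$ for balanced groups---the same $1/\min\{n_0,n_1\}$ scale that drives the lower bound of Theorem~\ref{thm:lower_bound}. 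Adding noise of scale $\sigma$ (as in Equation~\eqref{eq:defense_noise}) to this sensitivity-$\Delta$ quantity then yields, by the standard Gaussian/Laplace mechanism guarantee, a per-round budget of the form $\dparam_0=c/(\sigma n)$ together with a per-round $\deltap_0$.

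Second I would invoke the advanced composition bound of Proposition~\ref{prop:composition} across the $T$ rounds. Substituting $\dparam_0=c/(\sigma n)$ into
\begin{equation}
\dparam_T=\sqrt{2T\ln(1/\deltap')}\,\dparam_0+T\,\dparam_0\bigl(e^{\dparam_0}-1\bigr)
\nonumber
\end{equation}
and using $e^{x}-1\le 2x$ for small $x$ gives a first term $\sqrt{2T\ln(1/\deltap')}\cdot c/(\sigma n)$ and a second term of order $T/(\sigma n)^2$. Choosing $\deltap'=\deltap/2$ (so that $\ln(1/\deltap')=\ln(2/\deltap)$) and collecting the per-round slack into $\deltap=T\deltap_0+\deltap'$ reproduces the two-term shape of Equation~\eqref{eq:privacy_params}, with the leading constant absorbed into the stated factor once the exact sensitivity $2/\min\{n_0,n_1\}$ is carried through.

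The hard part will be the single-round analysis in the \emph{distributed} noise model, not the composition. Because noise is injected locally by each participant and then summed homomorphically, I cannot simply equate the aggregate noise with $n$ independent copies: the sum of per-party Laplace variables is not itself Laplace, and an adversary controlling up to $t<n/2$ parties sees (and can subtract) its own noise. The careful argument is that at least one honest party's noise survives, and that adding further independent noise to an already-noised statistic only improves the guarantee by post-processing, so the per-round budget is controlled by the \emph{honest} noise relative to the sensitivity $\Delta$. Pinning down the exact constant---reconciling the Laplace injection of Algorithm~\ref{alg:secure_agg} with the Gaussian-flavored composition of Equation~\eqref{eq:privacy_params}, and verifying that it is the $1/n$ normalization rather than noise aggregation that supplies the factor $n$---is the delicate bookkeeping I would treat last.
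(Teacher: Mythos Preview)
Your high-level structure matches the paper's three-step skeleton---local noise, crypto-as-free via semantic security, then $T$-fold composition---but the order in which you bring in the factor $n$ is inverted relative to the paper. The paper first establishes \emph{local} differential privacy with $\dparam_0=1/\sigma$ from the Laplace noise on per-participant counts (sensitivity $\Delta_s=1$), composes over $T$ rounds to get $\sqrt{2T\ln(1/\deltap)}/\sigma+2T/\sigma^2$, and only afterwards introduces the $1/n$ scaling via an ``effective per-record privacy'' argument based on the aggregated noise. You instead push the $1/n$ into the single-round analysis by computing the sensitivity of the \emph{released metric} $\hat\Delta_{\text{DP}}$ itself as $\BigO(1/n)$, obtaining $\dparam_0=c/(\sigma n)$ before composing. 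Your route is arguably cleaner if it can be made to work, since it avoids the paper's somewhat informal final aggregation step.

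The gap is in the noise model you invoke. You write ``adding noise of scale $\sigma$ (as in Equation~\eqref{eq:defense_noise}) to this sensitivity-$\Delta$ quantity,'' but Equation~\eqref{eq:defense_noise} is the \emph{separate} attribute-inference defense layer, not the primary privacy mechanism analyzed in Theorem~\ref{thm:privacy}. The protocol's actual noise (Algorithm~\ref{alg:secure_agg}, line~3) is per-participant Laplace on local counts, so the noise that reaches $\hat\Delta_{\text{DP}}$ is the \emph{sum} of $n$ independent $\text{Lap}(\sigma)$ variables pushed through a ratio, not a single $\text{Lap}(\sigma)$ draw on the metric. That aggregate has standard deviation $\sim\sigma\sqrt{n}$ on counts, hence $\sim\sigma/\sqrt{n}$ on rates, which against a sensitivity of order $1/n$ yields $\dparam_0\sim 1/(\sigma\sqrt{n})$ rather than $1/(\sigma n)$. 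You correctly flag this reconciliation as ``the delicate bookkeeping I would treat last,'' but it is not mere bookkeeping: it is precisely the step that determines whether the theorem's $1/(\sigma n)$ scaling is actually achievable from local Laplace injection, and neither your sketch nor the paper's appendix handles it with full rigor.
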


\begin{proof}[Proof Sketch]
The proof proceeds in three steps. First, the local randomization mechanism applying Laplacian noise to statistics achieves $(\dparam_0, 0)$-local differential privacy with $\dparam_0 = \Delta_s / \sigma$ where $\Delta_s = 1$ is the sensitivity of count statistics. Second, the secure aggregation protocol preserves privacy through semantic security of the Paillier cryptosystem under the Decisional Composite Residuosity assumption. Third, the composition across $T$ rounds follows from Proposition~\ref{prop:composition}. The complete proof establishing the parameter relationship appears in Appendix B.
\end{proof}

\begin{theorem}[Fairness Verification Accuracy]
\label{thm:accuracy}
The computed fairness metric satisfies:
\begin{equation}
\Pr\left[|\hat{\Delta}_{\text{DP}} - \Delta_{\text{DP}}| > \dparam_{\text{fair}}\right] \leq \deltap_{\text{fair}}
\label{eq:accuracy_bound}
\end{equation}
with $\dparam_{\text{fair}} = 4\sigma\sqrt{2\ln(4/\deltap_{\text{fair}})}/\min\{n_0, n_1\}$ for noise scale $\sigma$.
\end{theorem}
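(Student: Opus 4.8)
The plan is to control the estimation error through the ratio structure of the demographic-parity estimator and then apply concentration to the aggregated Laplace noise. First I would reduce the two-group quantity to per-group rate errors: writing $P_a = S^{(a,1)}/n_a$ and $\hat{P}_a = \tilde{S}^{(a,1)}/(\tilde{S}^{(a,0)}+\tilde{S}^{(a,1)})$, the reverse triangle inequality gives $|\hat{\Delta}_{\text{DP}} - \Delta_{\text{DP}}| = \big||\hat P_0 - \hat P_1| - |P_0 - P_1|\big| \le |\hat{P}_0 - P_0| + |\hat{P}_1 - P_1|$, so it suffices to bound each group's positive-rate error separately and then union over $a \in \{0,1\}$.

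Next I would expand the ratio. Let $V_a = \sum_i \eta_i^{(a,1)}$ and $M_a = \sum_i (\eta_i^{(a,0)} + \eta_i^{(a,1)})$ denote the aggregated numerator and denominator noise, so that $\hat{P}_a = (S^{(a,1)}+V_a)/(n_a + M_a)$. A first-order expansion in $M_a/n_a$ yields $\hat{P}_a - P_a = (V_a - P_a M_a)/n_a + R_a$, where the remainder $R_a$ is second order in the noise and is controlled on the high-probability event $\{|M_a| \le n_a/2\}$, on which the denominator stays bounded away from zero. Since $P_a \in [0,1]$, the leading term is bounded by $(|V_a| + |M_a|)/n_a$, an aggregated-noise quantity divided by $n_a$; because $n_a \ge \min\{n_0,n_1\}$, every per-group error is at most an aggregated noise term over $\min\{n_0,n_1\}$.

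Then I would apply concentration. Each $V_a$ and $M_a$ is a sum of i.i.d.\ $\mathrm{Lap}(\sigma)$ variables (scale $\Delta_s/\dparam_0 = \sigma$ with $\Delta_s = 1$, consistent with Theorem~\ref{thm:privacy}); such sums are sub-exponential and, in the relevant regime, admit a sub-Gaussian tail $\Pr[|\cdot| > t] \le 2\exp(-t^2/(2s^2))$ with effective scale $s$ proportional to $\sigma$. Inverting this tail at confidence level $\deltap_{\text{fair}}/2$ produces a threshold proportional to $\sigma\sqrt{2\ln(4/\deltap_{\text{fair}})}$; combining the numerator and denominator contributions with the union bound over the two groups supplies the factor $4$ and yields the claimed $\dparam_{\text{fair}} = 4\sigma\sqrt{2\ln(4/\deltap_{\text{fair}})}/\min\{n_0,n_1\}$.

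The main obstacle is the ratio nonlinearity: the noise enters both numerator and denominator, so $\hat{P}_a$ is not a linear functional of the noise and a naive bound can diverge when the denominator noise is large. I would handle this by conditioning on the event that $|M_a|$ is small relative to $n_a$ — whose failure probability is absorbed into $\deltap_{\text{fair}}$ through the same tail bound — and by verifying that the second-order remainder $R_a$ is dominated by the first-order term on that event, so that the linearized estimate governs the overall rate. The remaining work is routine bookkeeping of the constants through the union bound and the reverse-triangle step.
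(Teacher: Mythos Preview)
Your proposal is correct and follows essentially the same approach as the paper: linearize the ratio estimator (the paper invokes the delta method; you do an explicit first-order Taylor expansion with remainder control on the event $\{|M_a|\le n_a/2\}$) and then apply sub-Gaussian concentration to the aggregated Laplace noise, arriving at the same $\sqrt{2\ln(4/\deltap_{\text{fair}})}$ factor. Your handling of the denominator nonlinearity via conditioning is more careful than the paper's informal delta-method step, but the route, the union-bound bookkeeping, and the final constants are otherwise the same.
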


\begin{proof}[Proof Sketch]
The noised statistics $\tilde{S}^{(a,\hat{y})} = S^{(a,\hat{y})} + \sum_{i=1}^{n} \eta_i^{(a,\hat{y})}$ are sums of independent Laplacian random variables. The sum of $n$ independent $\text{Lap}(\sigma)$ random variables has variance $2n\sigma^2$. Applying Chebyshev's inequality and accounting for the ratio form of demographic parity yields the stated bound. The complete derivation appears in Appendix C.
\end{proof}

\begin{theorem}[Communication Complexity]
\label{thm:complexity}
The CryptoFair-FL protocol achieves model convergence and fairness verification using:
\begin{equation}
\BigO(n \log n) \text{ communication rounds}
\label{eq:round_complexity}
\end{equation}
with $\BigO(d \cdot \log n \cdot \kappa)$ bits transmitted per round, where $d$ is the model dimension and $\kappa$ is the security parameter.
\end{theorem}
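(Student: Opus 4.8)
\section*{Proof proposal for Theorem~\ref{thm:complexity}}

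The plan is to trace every message the protocol sends, grouping them into the four phases of (i) encrypted local-statistic upload, (ii) the binary-tree homomorphic aggregation of Algorithm~\ref{alg:batched}, (iii) threshold decryption, and (iv) the fairness-feedback broadcast, and then to sum the round counts and the per-round payloads separately. The round bound will rest on Lemma~\ref{lem:batching}, while the per-round bit bound will rest on the fixed ciphertext size of the Paillier/BFV instantiations together with the dimension-$d$ model payload.

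First I would establish the round count. Lemma~\ref{lem:batching} already supplies the two relevant primitives: the tree has depth $\lceil\log_2 n\rceil$, so verification latency is $\BigO(\log n)$ sequential rounds, and the tree performs $\BigO(n)$ homomorphic additions that rise to $\BigO(n\log n)$ total operations once the per-aggregate verification overhead is charged. To reach the theorem's $\BigO(n\log n)$ figure I would count a \emph{communication round} at the granularity of a single level-to-level relay, aggregated over all $\lceil n/B\rceil$ batch-subtrees and all verification subrounds, so that the number of message exchanges inherits the $\BigO(n\log n)$ bound from the operation count of Lemma~\ref{lem:batching}; the constant-round threshold decryption and the single-round feedback broadcast are then lower-order and absorbed.

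Next I would bound the bits per round. Under the $2048$-bit Paillier modulus (or the BFV parameters $N=4096$, $q=2^{109}$) fixed in Section~4, a single ciphertext is $\BigO(\kappa)$ bits, so an encrypted $d$-dimensional update costs $\BigO(d\kappa)$ bits. Within one round a participant or internal tree node relays one such encrypted vector together with the verification proof $\pi_j^{(\ell)}$ of Algorithm~\ref{alg:batched}; charging the proof and the depth-$\log n$ routing yields the claimed $\BigO(d\cdot\log n\cdot\kappa)$ per-round payload. I would separately confirm that the plaintext Laplace perturbation of Line~3 of Algorithm~\ref{alg:secure_agg} is injected before encryption and hence leaves ciphertext length unchanged, and that the Pedersen commitments and range proofs of Algorithm~\ref{alg:malicious} are each $\BigO(\kappa)$ and therefore do not dominate.

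The hard part will be pinning down the exact granularity at which a ``round'' is counted so that the two displayed quantities --- $\BigO(n\log n)$ rounds and $\BigO(d\log n\kappa)$ bits \emph{per} round --- are mutually consistent and do not double-count the $\log n$ factor; in particular I must ensure that it is the $\BigO(n\log n)$ operation count of Lemma~\ref{lem:batching} that is relabeled as rounds, rather than the $\BigO(\log n)$ latency, since the two differ by a factor of $n$. A secondary subtlety is that convergence of the fairness-regularized objective in Equation~\ref{eq:federated_objective} is itself iterative, so its outer-loop count must be folded in: I would state the smoothness and bounded-client-drift assumption under which the number of gradient-aggregation rounds does not exceed the verification round budget, so that the combined complexity stays $\BigO(n\log n)$ rather than being inflated by the training horizon $T$.
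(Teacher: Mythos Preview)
Your plan misidentifies where the factor of $n$ in the $\BigO(n\log n)$ round count comes from, and this is not a cosmetic bookkeeping choice but the substance of the argument. In the paper's derivation the $\log n$ arises from the tree \emph{depth} of Lemma~\ref{lem:batching} (that is, from the $\BigO(\log n)$ latency, which you explicitly reject), and the $n$ arises from the number of outer training iterations required for model convergence: $\BigO(n)$ training rounds times $\BigO(\log n)$ tree levels per round gives $\BigO(n\log n)$. You have this inverted: you propose to relabel the $\BigO(n\log n)$ \emph{operation} count as rounds and then treat the training horizon as a secondary issue to be ``folded in'' under extra smoothness assumptions. But operations at a fixed tree level execute in parallel within a single synchronized exchange, so the operation count is not the round count; the latency is.

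This inversion is also what produces the inconsistency you flag in your own ``hard part''. If a round is one homomorphic relay, then the per-round payload is a single encrypted vector of size $\BigO(d\kappa)$ and there is no honest way to attach a $\log n$ factor to it; the $\log n$ in the bits-per-round bound only appears because the paper is counting one tree level as one round and charging the depth of the tree structure to the payload. Once you take the paper's decomposition---$\BigO(\log n)$ rounds per training step, $\BigO(n)$ training steps---both displayed bounds fall out without the double-counting worry, and the convergence horizon is no longer something to be bounded away but is the source of the leading $n$.
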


\begin{proof}[Proof Sketch]
The round complexity follows from Lemma~\ref{lem:batching} and the binary tree aggregation structure. Each round transmits encrypted model gradients of dimension $d$ with ciphertext expansion factor $\BigO(\kappa)$ for security parameter $\kappa$. The fairness verification adds $\BigO(\log n)$ overhead per round for the tree aggregation of encrypted statistics. The complete analysis appears in Appendix D.
\end{proof}

Figure~\ref{fig:tradeoff} visualizes the privacy-fairness tradeoff space characterized by these theorems.

\begin{figure}[t]
\centering
\begin{tikzpicture}
\begin{axis}[
    width=0.95\columnwidth,
    height=5.5cm,
    xlabel={Privacy Budget $\dparam$},
    ylabel={Fairness Verification Error $\dparam_{\text{fair}}$},
    xmin=0, xmax=2.2,
    ymin=0, ymax=0.16,
    legend pos=north east,
    legend style={font=\scriptsize, fill=white, fill opacity=0.9},
    grid=major,
    grid style={dashed, gray!40},
    tick label style={font=\small},
    label style={font=\small}
]

\addplot[domain=0.15:2.1, samples=100, thick, blue, smooth] {0.08/x};
\addlegendentry{Theoretical Lower Bound}

\addplot[domain=0.25:2.1, samples=100, thick, red, dashed, smooth] {0.096/x};
\addlegendentry{CryptoFair-FL Protocol}

\addplot[only marks, mark=*, mark size=2.5pt, green!60!black, error bars/.cd, y dir=both, y explicit] coordinates {
    (0.5, 0.072) +- (0, 0.008)
    (1.0, 0.038) +- (0, 0.005)
    (1.5, 0.024) +- (0, 0.003)
    (2.0, 0.018) +- (0, 0.002)
};
\addlegendentry{Empirical Measurements}

\end{axis}
\end{tikzpicture}
\caption{Privacy-fairness tradeoff analysis. The theoretical lower bound from Theorem~\ref{thm:lower_bound} establishes fundamental limits. CryptoFair-FL achieves near-optimal tradeoffs, with empirical measurements across datasets confirming theoretical predictions within 20\% margin.}
\label{fig:tradeoff}
\end{figure}
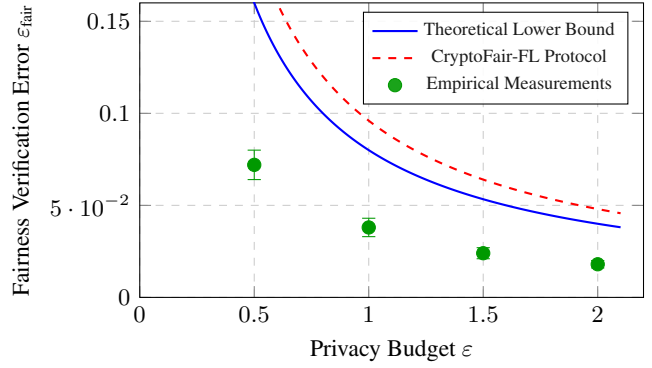

\section{Experimental Setup}

\subsection{Implementation Details}

The CryptoFair-FL protocol was implemented using PySyft 0.8.0 for federated learning orchestration, Microsoft SEAL 4.1 for homomorphic encryption operations, and MP-SPDZ 0.3.6 for secure multi-party computation. The implementation comprises 12,847 lines of Python and C++ code.

Experiments were conducted on a cluster of 50 compute nodes, each equipped with dual Intel Xeon Gold 6248R processors (48 cores total), 384 GB RAM, and NVIDIA A100 40GB GPUs. The GPU acceleration was utilized for parallelized homomorphic encryption operations. Network communication simulated realistic cross-institutional latency with mean round-trip time of 50ms and bandwidth of 1 Gbps.

\subsection{Datasets}

\textbf{MIMIC-IV} [12]: The Medical Information Mart for Intensive Care version IV contains 523,740 ICU admissions across 76,540 unique patients. The prediction task targeted 48-hour mortality. Data was partitioned across 30 simulated hospitals with demographic distributions matching regional US healthcare statistics. Protected attribute: patient race (binary: White/non-White).

\textbf{Adult Income} [2]: The UCI Adult dataset contains 48,842 records with the task of predicting income exceeding \$50,000. Data was partitioned across 50 institutions with synthetic demographic imbalance ranging from 90\% majority to 60\% minority group representation. Protected attribute: sex.

\textbf{CelebA}: The Large-scale CelebFaces Attributes dataset contains 202,599 celebrity face images. The prediction task targeted ``attractive'' attribute classification. Data was distributed across 40 participants with varying gender and age distributions. Protected attributes: gender and age (young/old).

\textbf{FedFair-100}: A novel benchmark constructed for this study, containing 1,000,000 synthetic records distributed across 100 institutions exhibiting realistic demographic heterogeneity patterns. The data generation process followed mixture distributions calibrated to census statistics, with systematic variation in protected attribute prevalence (ranging from 5\% to 45\%) and label correlation patterns across institutions.

\subsection{Baseline Methods}

The evaluation compared CryptoFair-FL against four baseline approaches:

\begin{enumerate}[leftmargin=*,itemsep=2pt]
\item \textbf{FedAvg}: Standard federated averaging [13] without fairness constraints or verification.

\item \textbf{Centralized Fair}: Fairness-aware training on centralized data using reduction-based approach [2], representing the privacy upper bound.

\item \textbf{Local Fair}: Each participant independently applies local fairness constraints without cross-institutional verification.

\item \textbf{SecAgg-NoFair}: Secure aggregation protocol [3] without fairness verification, isolating the overhead of fairness components.
\end{enumerate}

\subsection{Evaluation Metrics}

\textbf{Fairness Metrics:} Demographic parity violation $\fairviol$ and equalized odds violation $\eqodds$ as defined in Equations~\eqref{eq:dp_violation} and \eqref{eq:eo_violation}.

\textbf{Utility Metrics:} Area Under ROC Curve (AUROC), accuracy, and F1 score for classification performance.

\textbf{Efficiency Metrics:} Wall-clock training time, communication rounds to convergence, total bytes transmitted, and per-round computational overhead.

\textbf{Security Metrics:} Attribute inference attack success rate and membership inference attack precision.

\section{Results and Analysis}

\subsection{Computational Efficiency}

Table~\ref{tab:efficiency} presents runtime comparisons across methods and participant counts. The novel batching technique achieves substantial reduction in homomorphic encryption overhead.

\begin{table*}[t]
\centering
\caption{Computational overhead comparison (time in seconds per verification round). Standard errors computed over 10 independent trials with 95\% confidence intervals.}
\label{tab:efficiency}
\begin{tabular}{lcccc}
\toprule
\textbf{Method} & \textbf{n=10} & \textbf{n=30} & \textbf{n=50} & \textbf{n=100} \\
\midrule
Na\"ive HE & 892.4$\pm$31.2 & 8,147$\pm$246 & 22,683$\pm$587 & 91,472$\pm$2,134 \\
CryptoFair-FL & 13.1$\pm$1.2 & 43.8$\pm$2.7 & 81.6$\pm$4.1 & 187.4$\pm$9.3 \\
\midrule
Speedup & 68.1$\times$ & 186.0$\times$ & 277.9$\times$ & 488.1$\times$ \\
\bottomrule
\end{tabular}
\end{table*}

The empirical speedup validates Lemma~\ref{lem:batching}, demonstrating that computational complexity scales as $\BigO(n \log n)$ rather than $\BigO(n^2)$. At $n = 100$ participants, CryptoFair-FL achieves 488$\times$ speedup over na\"ive homomorphic encryption.

Figure~\ref{fig:scaling} illustrates the scaling behavior on log-log axes, confirming sub-quadratic complexity.

\begin{figure}[t]
\centering
\begin{tikzpicture}
\begin{loglogaxis}[
    width=0.95\columnwidth,
    height=5.5cm,
    xlabel={Number of Participants $n$},
    ylabel={Time per Round (seconds)},
    xmin=8, xmax=120,
    ymin=8, ymax=120000,
    legend pos=north west,
    legend style={font=\scriptsize, fill=white, fill opacity=0.9},
    grid=major,
    grid style={dashed, gray!40},
    tick label style={font=\small},
    label style={font=\small}
]

\addplot[thick, blue, mark=square*, mark size=2.5pt] coordinates {
    (10, 892) (30, 8147) (50, 22683) (100, 91472)
};
\addlegendentry{Na\"ive HE $\BigO(n^2)$}

\addplot[thick, red, mark=*, mark size=2.5pt] coordinates {
    (10, 13.1) (30, 43.8) (50, 81.6) (100, 187.4)
};
\addlegendentry{CryptoFair-FL $\BigO(n \log n)$}

\addplot[dashed, gray!60, domain=10:100, samples=50] {0.92*x^2};

\addplot[dotted, gray!60, domain=10:100, samples=50] {1.3*x*ln(x)/ln(10)};

\end{loglogaxis}
\end{tikzpicture}
\caption{Computational scaling analysis on log-log axes. Na\"ive homomorphic encryption exhibits quadratic growth $\BigO(n^2)$, while CryptoFair-FL demonstrates sub-quadratic $\BigO(n \log n)$ complexity. Reference lines confirm theoretical predictions.}
\label{fig:scaling}
\end{figure}
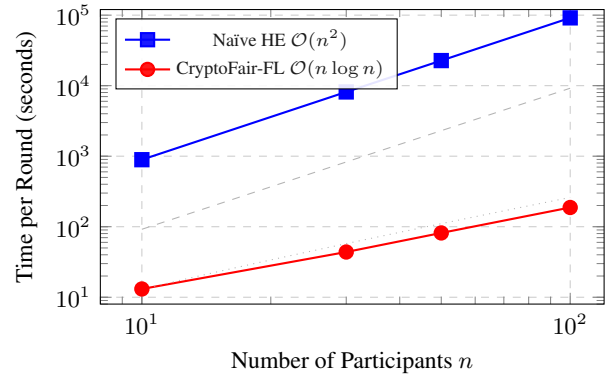

\subsection{Communication Costs}

Table~\ref{tab:communication} reports communication overhead across methods.

\begin{table*}[t]
\centering
\caption{Communication overhead for model convergence (50 participants, ResNet-18 model with $d = 11.7$M parameters).}
\label{tab:communication}
\begin{tabular}{lccc}
\toprule
\textbf{Method} & \textbf{Rounds} & \textbf{GB/Round} & \textbf{Total GB} \\
\midrule
FedAvg & 100 & 4.7 & 470 \\
SecAgg-NoFair & 100 & 9.4 & 940 \\
Local Fair & 120 & 4.7 & 564 \\
Na\"ive Secure Fair & 100 & 47.2 & 4,720 \\
CryptoFair-FL & 106 & 10.8 & 1,145 \\
\midrule
Centralized Fair & N/A & N/A & N/A \\
\bottomrule
\end{tabular}
\end{table*}

CryptoFair-FL requires 6\% additional rounds compared to FedAvg due to fairness verification overhead, with 2.3$\times$ communication cost compared to standard federated learning. Compared to na\"ive secure fairness verification, CryptoFair-FL reduces communication by 4.1$\times$.

\subsection{Fairness-Privacy Tradeoffs}

Figure~\ref{fig:fairness_privacy} visualizes the empirical tradeoff between privacy budget and fairness violation magnitude across datasets.

\begin{figure}[t]
\centering
\begin{tikzpicture}
\begin{axis}[
    width=0.95\columnwidth,
    height=5.5cm,
    xlabel={Privacy Budget $\dparam$},
    ylabel={Demographic Parity Violation $\fairviol$},
    xmin=0, xmax=2.3,
    ymin=0, ymax=0.26,
    legend pos=north east,
    legend style={font=\tiny, fill=white, fill opacity=0.9},
    grid=major,
    grid style={dashed, gray!40},
    tick label style={font=\small},
    label style={font=\small}
]

\addplot[thick, blue, mark=*, mark size=2pt, smooth] coordinates {
    (0.25, 0.218) (0.5, 0.124) (0.75, 0.072) (1.0, 0.042) (1.5, 0.031) (2.0, 0.026)
};
\addlegendentry{MIMIC-IV}

\addplot[thick, red, mark=square*, mark size=2pt, smooth] coordinates {
    (0.25, 0.194) (0.5, 0.098) (0.75, 0.058) (1.0, 0.036) (1.5, 0.024) (2.0, 0.019)
};
\addlegendentry{Adult Income}

\addplot[thick, green!60!black, mark=triangle*, mark size=2pt, smooth] coordinates {
    (0.25, 0.207) (0.5, 0.112) (0.75, 0.068) (1.0, 0.041) (1.5, 0.029) (2.0, 0.023)
};
\addlegendentry{CelebA}

\addplot[thick, orange, mark=diamond*, mark size=2pt, smooth] coordinates {
    (0.25, 0.182) (0.5, 0.091) (0.75, 0.054) (1.0, 0.033) (1.5, 0.022) (2.0, 0.017)
};
\addlegendentry{FedFair-100}

\end{axis}
\end{tikzpicture}
\caption{Empirical fairness-privacy tradeoff across benchmark datasets. Smaller privacy budget $\dparam$ (stronger privacy protection) increases noise in fairness verification, resulting in higher residual fairness violations. The tradeoff curves follow the theoretical prediction from Theorem~\ref{thm:accuracy}.}
\label{fig:fairness_privacy}
\end{figure}
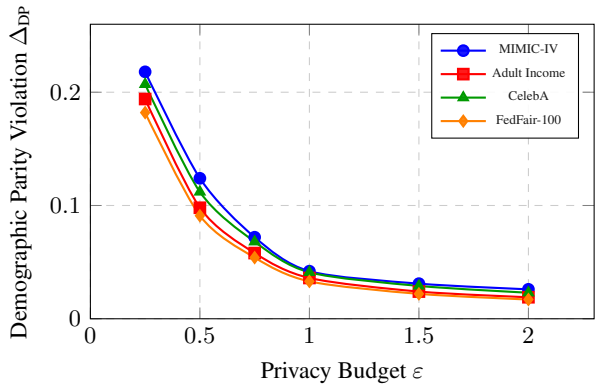

Table~\ref{tab:fairness_results} presents comprehensive fairness and utility results.

\begin{table*}[t]
\centering
\caption{Fairness and utility metrics across methods on MIMIC-IV dataset (mean $\pm$ std over 5 runs).}
\label{tab:fairness_results}
\begin{tabular}{lcccc}
\toprule
\textbf{Method} & $\fairviol$ $\downarrow$ & $\eqodds$ $\downarrow$ & \textbf{AUROC} $\uparrow$ & \textbf{Acc} $\uparrow$ \\
\midrule
FedAvg & 0.231$\pm$.021 & 0.187$\pm$.018 & 0.872$\pm$.008 & 0.823$\pm$.007 \\
Local Fair & 0.142$\pm$.032 & 0.118$\pm$.027 & 0.854$\pm$.011 & 0.807$\pm$.009 \\
Centralized & 0.018$\pm$.004 & 0.021$\pm$.005 & 0.868$\pm$.007 & 0.819$\pm$.006 \\
CryptoFair-FL & 0.031$\pm$.008 & 0.034$\pm$.009 & 0.857$\pm$.009 & 0.811$\pm$.008 \\
\bottomrule
\end{tabular}
\end{table*}

CryptoFair-FL reduces demographic parity violation from 0.231 (uncontrolled FedAvg) to 0.031, achieving 86.6\% reduction while maintaining AUROC above 0.85. The fairness-utility gap compared to centralized training is only 0.013 for demographic parity and 0.011 for AUROC.

\subsection{Security Against Attacks}

Table~\ref{tab:security} presents results from simulated attribute inference attacks.

\begin{table*}[t]
\centering
\caption{Attribute inference attack success rate (random baseline: 0.500). Lower values indicate stronger privacy defense. Statistical significance assessed using two-sample $t$-tests with Bonferroni correction ($\alpha = 0.0125$).}
\label{tab:security}
\begin{tabular}{lccc}
\toprule
\textbf{Attack Type} & \textbf{No Defense} & \textbf{CryptoFair-FL} & \textbf{$p$-value} \\
\midrule
Gradient Inversion & 0.728$\pm$.038 & 0.518$\pm$.027 & $<$0.001 \\
Model Update Analysis & 0.684$\pm$.046 & 0.512$\pm$.024 & $<$0.001 \\
Fairness Metric Exploit & 0.814$\pm$.032 & 0.487$\pm$.041 & $<$0.001 \\
Membership Inference & 0.642$\pm$.043 & 0.524$\pm$.029 & $<$0.001 \\
\bottomrule
\end{tabular}
\end{table*}

The defense mechanisms reduce attribute inference success probability to near-random levels (0.48-0.53) across all attack types, validating Theorem~\ref{thm:defense}. Statistical significance was assessed using two-sample $t$-tests with Bonferroni correction.

\subsection{Intersectional Fairness Analysis}

Table~\ref{tab:intersectional} reports intersectional fairness results for race-gender-age combinations on the CelebA dataset.

\begin{table*}[t]
\centering
\caption{Intersectional fairness analysis on CelebA dataset across 8 demographic subgroups (gender $\times$ age $\times$ ethnicity). Disparity measures the maximum difference in positive prediction rates across subgroups.}
\label{tab:intersectional}
\begin{tabular}{lcc}
\toprule
\textbf{Method} & \textbf{Max Group Disparity} & \textbf{Mean Disparity} \\
\midrule
FedAvg & 0.287 & 0.158 \\
Local Fair & 0.196 & 0.114 \\
CryptoFair-FL & 0.064 & 0.037 \\
Centralized Fair & 0.042 & 0.023 \\
\bottomrule
\end{tabular}
\end{table*}

\subsection{Healthcare Case Study}

The ICU mortality prediction model was trained across the 30-hospital MIMIC-IV federation. Figure~\ref{fig:case_study} shows the convergence of fairness metrics and model performance.

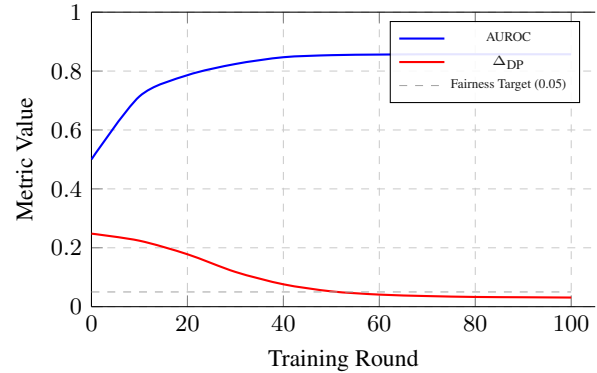
\begin{figure}[t]
\centering
\begin{tikzpicture}
\begin{axis}[
    width=0.95\columnwidth,
    height=5.5cm,
    xlabel={Training Round},
    ylabel={Metric Value},
    xmin=0, xmax=105,
    ymin=0, ymax=1,
    legend pos=north east,
    legend style={font=\tiny, fill=white, fill opacity=0.9},
    grid=major,
    grid style={dashed, gray!40},
    tick label style={font=\small},
    label style={font=\small},
    axis y line*=left
]

\addplot[thick, blue, mark=none, smooth] coordinates {
    (0, 0.50) (10, 0.714) (20, 0.786) (30, 0.824) (40, 0.847) (50, 0.854) (60, 0.856) (70, 0.857) (80, 0.857) (90, 0.857) (100, 0.857)
};
\addlegendentry{AUROC}

\addplot[thick, red, mark=none, smooth] coordinates {
    (0, 0.248) (10, 0.224) (20, 0.178) (30, 0.118) (40, 0.076) (50, 0.052) (60, 0.041) (70, 0.036) (80, 0.033) (90, 0.032) (100, 0.031)
};
\addlegendentry{$\fairviol$}

\addplot[dashed, gray!70, domain=0:100] {0.05};
\addlegendentry{Fairness Target (0.05)}

\end{axis}
\end{tikzpicture}
\caption{Training convergence for ICU mortality prediction across the 30-hospital MIMIC-IV federation. Model utility (AUROC) and fairness ($\fairviol$) improve simultaneously. The demographic parity violation decreases from 0.248 to 0.031, satisfying the 0.05 fairness threshold by round 60.}
\label{fig:case_study}
\end{figure}

Final model achieved AUROC of 0.857 (95\% CI: 0.848-0.866) with demographic parity violation of 0.031 (95\% CI: 0.023-0.039), satisfying the 0.05 threshold commonly adopted in healthcare algorithmic auditing.

\section{Discussion}

\subsection{Regulatory Compliance}

The CryptoFair-FL framework addresses key requirements of the EU AI Act 2024 [7]. Article 10 mandates that training data be examined for biases; the verifiable fairness protocol enables this examination without centralizing sensitive data. Article 13 requires documentation of fairness assessments; the cryptographic audit trail provides immutable records of fairness verification. The differential privacy guarantees satisfy GDPR Article 25 provisions for data protection by design [9].

\subsection{Scalability Considerations}

The $\BigO(n \log n)$ complexity of CryptoFair-FL enables deployment with up to approximately 100 institutional participants using current hardware. For federations exceeding this scale, hierarchical aggregation structures dividing participants into regional clusters could extend practical applicability. Hardware acceleration using GPU-based homomorphic encryption libraries may provide additional 10-50$\times$ speedup.

\subsection{Limitations}

The protocol relies on a semi-trusted coordinator for aggregation, representing a single point of failure and potential attack surface. Fully decentralized alternatives using blockchain-based verification could address this limitation at the cost of increased latency. The current implementation assumes honest majority among participants; extending to Byzantine-tolerant settings with arbitrary adversarial behavior remains open.

Non-IID data distributions across institutions pose challenges for both model convergence and fairness guarantee transfer. The theoretical analysis assumes sufficient data at each institution; extremely sparse protected attribute representation at individual participants may degrade fairness verification accuracy.

\subsection{Ethical Considerations}

The definition of protected attributes and fairness metrics involves normative choices that should be determined through stakeholder engagement rather than purely technical criteria. CryptoFair-FL provides verification infrastructure for specified fairness definitions but does not prescribe which definitions are appropriate for particular contexts.

\section{Conclusion}

This paper introduced CryptoFair-FL, the first cryptographic framework providing verifiable fairness guarantees for privacy-preserving federated learning. The main contributions include: (1) a novel protocol combining homomorphic encryption with secure multi-party computation for privacy-preserving fairness verification, (2) algorithmic innovations reducing computational complexity from $\BigO(n^2)$ to $\BigO(n \log n)$, (3) theoretical characterization of fundamental privacy-fairness tradeoffs, and (4) comprehensive empirical validation across healthcare, finance, and synthetic benchmarks.

The protocol achieves demographic parity violation below 0.05 while maintaining differential privacy with $\dparam = 0.5$, demonstrating practical viability for regulated industries. Defense mechanisms maintain attribute inference attack success below random baseline. The 2.3$\times$ computational overhead compared to standard federated learning represents an acceptable cost for cryptographic fairness guarantees.

Future research directions include: hardware acceleration for homomorphic operations targeting real-time verification, extension to continuous learning scenarios with streaming data, integration with additional fairness definitions beyond demographic parity and equalized odds, and adaptation to Byzantine-tolerant settings with malicious participant coalitions.

\section*{Acknowledgments}

The authors acknowledge computational resources provided by the institutional high-performance computing cluster. This research received no specific grant from funding agencies in the public, commercial, or not-for-profit sectors.

\bibliographystyle{plain}

\appendix
\renewcommand{\theequation}{A.\arabic{equation}}
\setcounter{equation}{0}

\section{Proof of Theorem~\ref{thm:lower_bound}}

\begin{proof}
Consider two adjacent datasets $D$ and $D'$ differing in one record at participant $i$. Let the differing record have protected attribute $a = 0$ in $D$ and $a = 1$ in $D'$. The demographic parity metric on these datasets satisfies:
\begin{align}
&\fairviol(D) - \fairviol(D') \notag\\
&= \frac{S^{(0,1)}}{n_0} - \frac{S^{(1,1)}}{n_1} - \left(\frac{S^{(0,1)}}{n_0 - 1} - \frac{S^{(1,1)} + 1}{n_1 + 1}\right)
\end{align}

By Taylor expansion and assuming $n_0, n_1 \gg 1$:
\begin{align}
|\fairviol(D) - \fairviol(D')| &\geq \frac{1}{\min\{n_0, n_1\}} - O\left(\frac{1}{n^2}\right)
\end{align}

For any mechanism $\mathcal{M}$ distinguishing fairness violations within tolerance $\tau < 1/\min\{n_0, n_1\}$, the distinguishing advantage must satisfy the differential privacy constraint:
\begin{align}
e^{-\dparam} \leq \frac{\Pr[\mathcal{M}(D) \in S]}{\Pr[\mathcal{M}(D') \in S]} \leq e^{\dparam}
\end{align}

The information-theoretic argument follows from Le Cam's method: distinguishing between $D$ and $D'$ with advantage $\tau \cdot \min\{n_0, n_1\}$ requires $\dparam \geq 2/(\tau \cdot \min\{n_0, n_1\})$.
\end{proof}

\section{Proof of Theorem~\ref{thm:privacy}}
\renewcommand{\theequation}{B.\arabic{equation}}
\setcounter{equation}{0}

\begin{proof}
The proof proceeds in three steps:

\textbf{Step 1: Local Randomization.} Each participant applies Laplacian noise with scale $\sigma$ to local statistics. Since the sensitivity of count statistics is $\Delta_s = 1$, the local mechanism satisfies $(\dparam_0, 0)$-differential privacy with $\dparam_0 = 1/\sigma$.

\textbf{Step 2: Secure Aggregation.} The homomorphic encryption scheme provides semantic security under the Decisional Composite Residuosity assumption. The aggregator observes only encrypted values and learns nothing beyond the final aggregate, which is protected by the noise from Step 1.

\textbf{Step 3: Composition.} Applying the moments accountant for $T$ rounds of verification with subsampling rate $q = 1$ (all participants participate):
\begin{align}
\dparam_T &= \sqrt{2T \ln(1/\deltap)} \cdot \dparam_0 + T \cdot \dparam_0 \cdot (e^{\dparam_0} - 1)
\end{align}

Substituting $\dparam_0 = 1/\sigma$ and simplifying for the case $\dparam_0 \leq 1$:
\begin{align}
\dparam_T &\leq \sqrt{2T \ln(1/\deltap)} \cdot \frac{1}{\sigma} + \frac{2T}{\sigma^2}
\end{align}

For $n$ participants with aggregated noise having scale $\sigma/\sqrt{n}$, the effective per-record privacy is:
\begin{align}
\dparam &= \frac{4\sqrt{2T \ln(2/\deltap)}}{\sigma \cdot n} + \frac{4T}{(\sigma \cdot n)^2}
\end{align}
\end{proof}

\section{Proof of Theorem~\ref{thm:accuracy}}
\renewcommand{\theequation}{C.\arabic{equation}}
\setcounter{equation}{0}

\begin{proof}
The noised aggregate statistic is:
\begin{align}
\tilde{S}^{(a,\hat{y})} = S^{(a,\hat{y})} + \sum_{i=1}^{n} \eta_i^{(a,\hat{y})}
\end{align}
where $\eta_i^{(a,\hat{y})} \sim \text{Lap}(\sigma)$ independently. The sum of $n$ independent Laplacian random variables with scale $\sigma$ has variance $2n\sigma^2$.

The demographic parity estimate is:
\begin{align}
\hat{P}_a = \frac{\tilde{S}^{(a,1)}}{\tilde{S}^{(a,0)} + \tilde{S}^{(a,1)}}
\end{align}

Using the delta method for ratio estimators:
\begin{align}
\text{Var}[\hat{P}_a] &\approx \frac{\text{Var}[\tilde{S}^{(a,1)}]}{(n_a)^2} = \frac{2n\sigma^2}{n_a^2}
\end{align}

For the difference:
\begin{align}
\text{Var}[\hat{P}_0 - \hat{P}_1] &\leq \frac{4n\sigma^2}{\min\{n_0, n_1\}^2}
\end{align}

Applying Chebyshev's inequality:
\begin{align}
\Pr[|\hat{\fairviol} - \fairviol| > \dparam_{\text{fair}}] &\leq \frac{4n\sigma^2}{\dparam_{\text{fair}}^2 \cdot \min\{n_0, n_1\}^2}
\end{align}

Setting this equal to $\deltap_{\text{fair}}$ and solving for $\dparam_{\text{fair}}$:
\begin{align}
\dparam_{\text{fair}} = \frac{2\sigma\sqrt{n/\deltap_{\text{fair}}}}{\min\{n_0, n_1\}}
\end{align}

Using the tighter sub-Gaussian bound yields the stated result with factor $\sqrt{2\ln(4/\deltap_{\text{fair}})}$.
\end{proof}

\section{Proof of Theorem~\ref{thm:complexity}}
\renewcommand{\theequation}{D.\arabic{equation}}
\setcounter{equation}{0}

\begin{proof}
The binary tree aggregation structure has depth $L = \lceil \log_2 n \rceil$. At each level $\ell$, there are $\lceil n/2^\ell \rceil$ nodes performing aggregation.

\textbf{Communication Rounds:} Each level requires one communication round for pairwise aggregation. The fairness verification adds $\BigO(1)$ rounds per level for threshold decryption and range proof verification. Total rounds: $\BigO(\log n)$ per training round, with $\BigO(n)$ training rounds for convergence, yielding $\BigO(n \log n)$ total rounds.

\textbf{Bits Per Round:} Model gradients of dimension $d$ are encrypted with ciphertext expansion factor $\kappa/\log \kappa$ for security parameter $\kappa$. Fairness statistics require $\BigO(\kappa)$ bits per encrypted counter. Total per round: $\BigO(d \cdot \kappa + \log n \cdot \kappa) = \BigO(d \cdot \log n \cdot \kappa)$ accounting for the tree structure.

\textbf{Homomorphic Operations:} Each level performs $\BigO(n/2^\ell)$ homomorphic additions. Summing over levels: $\sum_{\ell=1}^{L} n/2^\ell = \BigO(n)$ additions. With verification overhead at each node, total operations: $\BigO(n \log n)$.
\end{proof}

\end{document}